\documentclass[letterletter, 10 pt, journal, twoside]{ieeetran}
\ifCLASSINFOpdf
  % \usepackage[pdftex]{graphicx}
  % declare the path(s) where your graphic files are
  % \graphicspath{{../pdf/}{../jpeg/}}
  % and their extensions so you won't have to specify these with
  % every instance of \includegraphics
  % \DeclareGraphicsExtensions{.pdf,.jpeg,.png}
\else
  % or other class option (dvipsone, dvipdf, if not using dvips). graphicx
  % will default to the driver specified in the system graphics.cfg if no
  % driver is specified.
  % \usepackage[dvips]{graphicx}
  % declare the path(s) where your graphic files are
  % \graphicspath{{../eps/}}
  % and their extensions so you won't have to specify these with
  % every instance of \includegraphics
  % \DeclareGraphicsExtensions{.eps}
\fi
\usepackage[comma,numbers,square,sort&compress]{natbib}
\usepackage{algorithm} %format of the algorithm
\usepackage{algorithmic} %format of the algorithm
\usepackage{setspace}
\usepackage{arydshln}
\usepackage{multirow} %multirow for format of table
\usepackage{xcolor}
\usepackage{setspace}
\usepackage{amsmath}
\newtheorem{theorem}{\textbf{Theorem}}
\newtheorem{lemma}{\textbf{Lemma}}
\newtheorem{example}{\textbf{Example}}

\newtheorem{remark}{\textbf{Remark}}
\newtheorem{definition}{\textbf{Definition}}

\newtheorem{proposition}{\textbf{Proposition}}

\newcommand\myeq{\mathrel{\stackrel{\makebox[0pt]{\mbox{\normalfont\tiny (b)}}}{=}}}
\newcommand\mge{\mathrel{\stackrel{\makebox[0pt]{\mbox{\normalfont\tiny (a)}}}{\ge}}}
\newcommand\myeqa{\mathrel{\stackrel{\makebox[0pt]{\mbox{\normalfont\tiny (a)}}}{=}}}

\usepackage{pmat}
\newenvironment{proof}{{\noindent{\bf \emph{Proof:}}}\quad}{\hfill $\square$\par}

\usepackage{multirow}
\usepackage{url}
\usepackage{subfigure}
\usepackage{fancyhdr}
\usepackage{amsmath}
\usepackage{multirow}
\usepackage{amssymb }
\usepackage{color}
\usepackage{graphics} % for pdf,  bitmapped graphics files
\usepackage{graphicx}
\usepackage{geometry}
\usepackage{setspace}
%%in the final version this can be deleted
%\usepackage{showkeys}

%\geometry{left=1.69cm,right=1.69cm,top=2.00cm,bottom=2.00cm} %\geometry{left=1.69cm,right=1.69cm,top=2.00cm,bottom=1.51cm}
\geometry{left=1.45cm,right=1.40cm,top=1.65cm,bottom=1.50cm}

\pagestyle{fancy}                % ÉèÖÃҳü
%\lhead{page \thepage\ of \pageref{LastPage}}
%\chead{this letter\\has been accepted}
%\fancyhead[L]{this letter\\has been accepted}
%\renewcommand{\headrulewidth}{0pt}
%\renewcommand{\baselinestretch}{0.99999} \normalsize
%\rhead{\small\leftmark}

% correct bad hyphenation here
%\hyphenation{op-tical net-works semi-conduc-tor}

\begin{document}
%\begin{spacing}{1.5}
%
% letter title
% Titles are generally capitalized except for words such as a, an, and, as,
% at, but, by, for, in, nor, of, on, or, the, to and up, which are usually
% not capitalized unless they are the first or last word of the title.
% Linebreaks \\ can be used within to get better formatting as desired.
% Do not put math or special symbols in the title.
%\title{\LARGE \bf
%
%Structural Controllability of Linear Descriptor Networks with Local Symmetries
%%On Link Selection and Controllability Robustness for Networks: Complexity Analysis
%%On Link Selection and Robustness for Network Controllability: Complexity Analysis
%Perturbation-Tolerant Structural Controllability with Application to the Controllability Radius Problems

%
\title{\LARGE {PTSC: a New Definition for Structural Controllability \\~ under Numerical Perturbations}} %On a New Strong Structural Controllability Definition with Implications to the Controllability Radius Problems
%\title{Refined Study on Strong Structural Controllability of Linear Systems via a Randomized Approach} %Generic Detectability and Isolability of Topology Failures in Networks of Linear Systems%  Networked Dynamic Systems
% On a new strong structural controllability for linear systems and its applications to the controllability radius problems

\author{Yuan Zhang, Yuanqing Xia
\thanks{{\bf This paper is a {\emph{full}} version of a conference paper to appear in the 40th Chinese Control Conference on July 26-28, 2021 in Shanghai, China.}.

 This work was supported in part by the  China Postdoctoral Innovative Talent Support Program (BX20200055), the China Postdoctoral
Science Foundation (2020M680016), the National Natural Science Foundation of China (62003042), and the State Key Program of National Natural Science Foundation of China (61836001).

Yuan Zhang and Yuanqing Xia are with the School of Automation, Beijing Institute of Technology, Beijing, China {(email: {\tt\small zhangyuan14@bit.edu.cn, xia\_yuanqing@bit.edu.cn}).} %(emails: zhangyuan14@bit.edu.cn, xia_yuanqing@bit.edu.cn,	kunliubit@bit.edu.cn)

%Jun Shang is with the Department of Electrical and Computer Engineering, University of Alberta, Edmonton, Canada T6G 1H9 (email:{\tt\small  jshang2@ualberta.ca}).
       } }
\pagestyle{empty} % Removes all the page numbers (except for the title page)
% make the title area
\maketitle
\thispagestyle{empty} % Removes the page number in the first page
% As a general rule, do not put math, special symbols or citations
% in the abstract or keywords. %malicious
%\geometry{left=1.69cm,right=1.69cm,top=2.10cm,bottom=1.51cm}
\begin{abstract}
This paper proposes a novel notion for structural controllability under structured numerical perturbations, namely the perturbation-tolerant structural controllability (PTSC), on a single-input structured system whose entries can be classified into three categories: fixed zero entries, unknown generic entries whose values are fixed but unknown, and perturbed entries that can take arbitrary complex values. Such a system is PTSC if, for almost all values of the unknown generic entries in the parameter space, the corresponding controllable system realizations can preserve controllability under arbitrary complex-valued perturbations with their structure prescribed by the perturbed entries. This new notion can characterize the generic property in controllability preservation under structured numerical perturbations. We give a necessary and sufficient condition for a single-input system to be PTSC, whose verification has polynomial time complexity. Our results can serve as some feasibility conditions for the conventional {\emph{structured controllability radius}} problems from a generic view.
\end{abstract}
\begin{IEEEkeywords}
Structural controllability, strong structural controllability, structured numerical perturbations, generic property
%Undirected diffusive network, structural controllability, network analysis and control, vector-weighted Laplacian %,  submodular function minimization
\end{IEEEkeywords}
%\overrideIEEEmargins
% Comment this command for final version.
\section{Introduction}
In recent years, security has been becoming an attractive issue in the control and estimation of cyber-physical systems, such
as chemical processes, power grids and transportation networks \cite{wood2002denial,Buldyrev2009Catastrophic,fawzi2014secure,mitra2019byzantine}. The robustness of various system properties have been investigated under internal faults (like disconnections of links/nodes \cite{Buldyrev2009Catastrophic,zhang2020generic}) or external attacks (like adversarial sensor/actuator attacks \cite{fawzi2014secure}), including stability \cite{FabioFragility2018}, stabilization \cite{De2015Input}, controllability and observability \cite{commault2008observability,Rahimian2013Structural,zhang2019minimal,Y_Zhang_2017,zhang2018arxiv}. Particularly, as a fundamental system property, controllability/observability under structural perturbations (such as link/node/actuator/sensor removals or deletions) has been extensively explored on its robust performance. To name a few, \cite{commault2008observability} considered observability preservation under sensor removals, \cite{Rahimian2013Structural} investigated controllability preservation under simultaneous link and node failures, while \cite{zhang2019minimal,Y_Zhang_2017,zhang2018arxiv} systematically studied the involved optimization problems with respect to link/node/actuator/sensor removals from a computational perspective. Since controllability/observability is a generic property that depends mainly on the system structure \cite{generic}, its robustness is mainly dominated by the robustness property of the corresponding graphs.

Note that structural perturbation is a kind of numerical perturbation that makes the corresponding link has a zero weight. In the more general case where the perturbed links do not necessarily result in zero weights, controllability robustness has also been studied by computing the distance (in terms of the $2$-norm or the Frobenius norm) from a controllable system to the set of uncontrollable systems \cite{R.E1984Between,WickDistance1991,Hu2004Real,Gu2006distance}. Such a notion, also named {\emph{controllability radius}}, was first proposed by \cite{R.E1984Between}, and then developed by several other researchers on its efficient computations \cite{WickDistance1991,Hu2004Real,Gu2006distance}. Recently, by restricting the perturbation matrices to a prescribed structure, the so-called {\emph{structured controllability radius problem}} (SCRP) has also attracted researchers' interest, i.e., the problem of determining the smallest (Frbenius or $2$-) norm additive perturbation with a prescribed structure for which controllability fails to hold \cite{KarowStructured2009}. Towards this problem, various numerical algorithms have been proposed \cite{khare2012computing,johnson2018structured,bianchin2016observability}. However, due to the nonconvexity of this problem, all these algorithms are suboptimal \cite{johnson2018structured}. Moreover, since most of these algorithms adopted some relaxation and iterative techniques and owing to the involved rounding errors, there is usually no guarantee that the returned numerical perturbations can make the original system uncontrollable.

On the other hand, to avoid the potential numerical issues, strong structural controllability (SSC), a notion proposed by Mayeda and Yamada \cite{mayeda1979strong}, could also be used to measure the controllability robustness of a system against numerical perturbations. In the SSC theory, the system parameters are divided into two categories, indeterminate parameters and fixed zero parameters. A system is SSC, if whatever values (other than zero) the indeterminate parameters of the system may take, the system is controllable. Criteria for SSC in the single-input case was given in \cite{mayeda1979strong}, and then extended to the multi-input cases in \cite{bowden2012strong,monshizadeh2014zero}, as well as  allowing the existence of parameters that can take arbitrary values, including zero and nonzero \cite{jia2020unifying}. Note for SSC to measure controllability robustness, the numerical perturbations should have the same zero/nonzero structure as the original systems. While in practice, perturbations could be imposed to only partial system components (such as a subset of links of a network) and do not necessarily have the same structure as the original systems.

In this paper, we propose a new definition for structural controllability under structured numerical perturbations, namely, the perturbation-tolerant structural controllability (PTSC). The entries for system matrices are classified into three categories: fixed zero entries, unknown generic entries whose values are fixed but unknown (they can be seen as randomly selected values), and perturbed entries that can take arbitrary complex values. The notion of PTSC is defined in the following way: a system is PTSC if, for almost all values of the unknown generic entries in the parameter space, the corresponding controllable system realizations can preserve controllability under arbitrary complex-valued perturbations with structure prescribed by the perturbed entries. The main contributions of this paper are as follows:
 %Unlike the SSC, PTSC is a generic property in the sense that, depending on the system structure, either for almost all controllable system numerical realizations, there exists a structured numerical perturbation prescribed by the perturbed entries so that the resulting system is uncontrollable, or for almost all controllable system numerical realizations, the there is no such a perturbation (in this paper, we only focus on the single-input case; the multi-input case will soon appear in our future research).  This paper presents a necessary and sufficient condition for a single-input system to be PTSC, whose verification has polynomial time complexity. Applications of our results on the structured controllability radius problems are also discussed.

1) We propose a novel notion PTSC to study controllability preservation for a single-input structured system under structured numerical perturbations. In PTSC, the perturbation structure can be {\emph{arbitrary}} relative to the structure of the original system. This notion provides a new view in studying the robustness of structural controllability other than structural perturbations.

2) We have shown PTSC can characterize the generic property that,  depending on the structure of the original system and the perturbations, for almost all of the controllable system realizations, either they can preserve controllability under arbitrary complex-valued perturbations with the given structure, or there is a perturbation with the given structure that can make the corresponding system fail to be controllable. This is beneficial in studying the SCRPs from a generic view.

3) We give a necessary and sufficient condition for a single-input system to be PTSC, whose verification has polynomial time complexity. The derivation is
based on the one-edge preservation principle and a series of nontrivial results on the roots of determinants of generic matrix pencils.

The rest is organized as follows. Section II introduces the PTSC notion and proves the involved genericity for single-input systems. Section III presents some preliminaries required for our further derivations. Section IV gives a necessary and sufficient condition for a single-input system to be PTSC. Section V discusses the application of PTSC on the SCRP. The last section concludes this paper.

Natations: Given an integer $p\ge 0$, define ${\cal J}_p\doteq \{1,...,p\}$. For a $p\times q$ matrix $M$, $M[{\cal I}_1,{\cal I}_2]$ denotes the submatrix of $M$ whose rows are indexed by ${\cal I}_1$ and columns by ${\cal I}_2$, ${\cal I}_1\subseteq {\cal J}_p$, ${\cal I}_2\subseteq {\cal J}_q$. For a vector $b$, $b_i$ denotes the $i$th entry of $b$.

%\section{Preliminaries}

\section{The Notion of PTSC}\label{ProblemFormulation}
\subsection{Structured Matrix}
Before presenting the notion of PTSC, we first introduce the so-called structured matrix. A structured matrix is a matrix whose entries are either fixed zero (denoted by $0$) or indeterminate parameters (denoted by $*$). For description simplicity, we may simply say the entry represented by $*$ is a nonzero entry. Let $\{0,*\}^{p\times q}$ be the set of all $p\times q$ dimensional structured matrices. For $\bar M \in \{0,*\}^{p\times q}$, the following two sets of matrices are defined:
$${\bf S}_{\bar M}=\left\{ M\in {\mathbb C}^{p\times q}: M_{ij}=0\ {\text{if}}\ {\bar M}_{ij}=0 \right\},$$
$$\bar {\bf S}_{\bar M}\!=\!\left\{ M\in {\mathbb C}^{p\times q}: M_{ij}=0 \ {\text{if}}\ {\bar M}_{ij}=0, M_{ij}\ne 0 \ {\text{if}}\ {\bar M}_{ij}=* \right\}.$$
Any $M\in {\bf S}_{\bar M}$ is called a realization of $\bar M$. For two structured matrices $\bar M, \bar N \in \{0,*\}^{p\times q}$, $\vee$ is the entry-wise OR operation, i.e., $(\bar M\vee \bar N)_{ij}= *$ if $\bar M_{ij}=*$ or $\bar N_{ij}=*$; otherwise $(\bar M\vee \bar N)_{ij}=0$.

A structured matrix could also be seen as a matrix whose entries are parameterized by the free parameters of its $*$ entries, and therefore is sometimes called a {\emph{generic matrix}} \cite{Murota_Book}.  For a generic matrix $M$ and a constant matrix $N$ with the same dimension, $M+\lambda N$ defines a generic matrix pencil, which can be seen as a {\emph{matrix-valued polynomial of free parameters in $M$ and the variable $\lambda$}}.

\subsection{ Notion of PTSC}
Consider the linear time invariant (LTI) system
\begin{equation}\label{plant}\dot x(t)=Ax(t)+bu(t),\end{equation}
where $A\in {\mathbb C}^{n\times n}$, $b\in {\mathbb C}^{n\times 1}$. It is known that $(A,b)$ is controllable, if and only if the controllability matrix ${\cal C}(A,  b)$ defined as follows has full row rank.
$${\cal C}(A,b)=[b,Ab,\cdots, A^{n-1}b].$$

Let ${\bar F}\in \{0,*\}^{n\times {(n+1)}}$ be a structured matrix that specifies the structure
of the perturbation (matrix) $[\Delta A, \Delta b]$, that is, ${\bar F}_{ij}=0$ implies $[\Delta A, \Delta b]_{ij}=0$. In other words, $[\Delta A, \Delta b]\in {\bf S}_{\bar F}$. It is emphasized that throughout this paper, the perturbations are allowed to be {\emph {complex-valued}. We will also call the system $(A+\Delta A, b+\Delta b)$ the perturbed system.

\begin{definition}[PTC]
System $(A,b)$ in (\ref{plant}) is said to be {\emph{perturbation-tolerantly controllable}} (PTC) with respect to ${\bar F}$, if for all $(\Delta A,\Delta b)\in {\bf S}_{\bar F}$, $(A+\Delta A, b+\Delta b)$ is controllable. If $(A,b)$ is controllable but not PTC w.r.t. ${\bar F}$ (i.e., there exists a $(\Delta A, \Delta b)\in {\bf S}_{\bar F}$ making $(A+\Delta A, b+\Delta b)$ uncontrollable), $(A,b)$ is said to be {\emph{perturbation-sensitively controllable}} (PSC) w.r.t. ${\bar F}$.
\end{definition}

Let $\bar A \in \{0,*\}^{n\times n}, \bar b\in \{0,*\}^{n\times 1}$ be the structured matrices specifying the sparsity pattern of $A,b$, respectively. That is, $A\in {\bf S}_{
\bar A}$ and $b\in  {\bf S}_{\bar b}$.

\begin{definition}[Structural controllability] $(\bar A, \bar b)$ is said to be structurally controllable, if there exists a realization $(A,b)\in {\bf S}_{[\bar A, \bar b]}$ so that $(A,b)$ is controllable.
\end{definition}

A property is called generic for a set of systems if, depending on the (common) structure of parameterized systems in this set, either this property holds for almost all of the system parameters in the corresponding parameter space, or this property does not hold for almost all of the system parameters \cite{generic}. It is well-known that controllability is a generic property in the sense that, if $(\bar A, \bar b)$ is structurally controllable, then all realizations of $(\bar A, \bar b)$ are controllable except for a set with zero Lebesgue measure in the corresponding parameter space. For a structurally controllable pair $(\bar A, \bar b)$, let ${\bf CS}(\bar A,\bar b)$ denote the set of all {\emph{controllable}} complex-valued realizations of $(\bar A, \bar b)$. The following proposition reveals that PTC (or PSC) is a generic property in ${\bf CS}(\bar A, \bar b)$.

\begin{proposition}\label{generic_PSC} With $(\bar A, \bar b)$ and ${\bar F}$ defined above, suppose that $(\bar A, \bar b)$ is structurally controllable. Then, either for all $(A,b)\in {\bf CS}(\bar A,\bar b)$, $(A,b)$ is PTC w.r.t. ${\bar F}$, or for almost all $(A,b)\in {\bf CS}(\bar A,\bar b)$ except for a set with zero Lebesgue measure in the corresponding parameter space, $(A,b)$ is PSC w.r.t. ${\bar F}$.
\end{proposition}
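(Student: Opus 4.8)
The plan is to reduce the claimed dichotomy to a statement about a single determinant polynomial, and then to exploit the fact that over $\mathbb{C}$ a non-constant polynomial always has a zero. Let $\theta\in\mathbb{C}^{m}$ collect the free ($*$) entries of $(\bar A,\bar b)$ and let $\delta\in\mathbb{C}^{k}$ collect the free entries of $\bar F$, so that the realization $(A(\theta),b(\theta))$ and the perturbation $(\Delta A(\delta),\Delta b(\delta))$ depend linearly on their parameters, with $(\Delta A(\delta),\Delta b(\delta))$ ranging over all of $\mathbf{S}_{\bar F}$ as $\delta$ ranges over $\mathbb{C}^{k}$. Since the system is single-input, controllability is equivalent to $\det\mathcal{C}(A,b)\neq 0$, so I would introduce the polynomial
\begin{equation*}
P(\theta,\delta)\doteq\det\mathcal{C}\bigl(A(\theta)+\Delta A(\delta),\,b(\theta)+\Delta b(\delta)\bigr),
\end{equation*}
which is jointly polynomial in $(\theta,\delta)$, and whose specialization $P(\theta,0)=\det\mathcal{C}(A(\theta),b(\theta))$ is nonzero exactly on $\mathbf{CS}(\bar A,\bar b)$.

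The central step is the following equivalence. For a fixed $\theta\in\mathbf{CS}(\bar A,\bar b)$, the realization $(A(\theta),b(\theta))$ is PTC w.r.t. $\bar F$ if and only if the polynomial $P(\theta,\cdot)$ has no zero in $\mathbb{C}^{k}$. Here I would invoke the elementary fact that a polynomial over the algebraically closed field $\mathbb{C}$ has no zero if and only if it is a nonzero constant; this can be seen by fixing all but one of the variables at generic values so as to keep the leading coefficient in the remaining variable nonzero, and then applying the fundamental theorem of algebra to the resulting univariate polynomial. Since $P(\theta,0)\neq 0$ for $\theta\in\mathbf{CS}(\bar A,\bar b)$, the constant value is automatically nonzero, so the equivalence sharpens to: $(A(\theta),b(\theta))$ is PTC iff $P(\theta,\cdot)$ does not depend on $\delta$, and is PSC iff $P(\theta,\cdot)$ genuinely depends on $\delta$.

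To make this generic, I would expand $P(\theta,\delta)=P(\theta,0)+\sum_{\alpha\neq 0}c_{\alpha}(\theta)\,\delta^{\alpha}$ in powers of $\delta$, each coefficient $c_{\alpha}(\theta)$ being a polynomial in $\theta$. For $\theta\in\mathbf{CS}(\bar A,\bar b)$, PTC then holds iff $c_{\alpha}(\theta)=0$ for every multi-index $\alpha\neq 0$. If all such $c_{\alpha}$ vanish identically as polynomials in $\theta$, then $P(\theta,\cdot)$ is constant for every $\theta$ and hence every controllable realization is PTC, which is the first alternative. Otherwise some $c_{\alpha_{0}}$ with $\alpha_{0}\neq 0$ is a nonzero polynomial, so $\{\theta:c_{\alpha_{0}}(\theta)=0\}$ is a proper algebraic variety of zero Lebesgue measure; every $\theta\in\mathbf{CS}(\bar A,\bar b)$ off this variety makes $P(\theta,\cdot)$ depend on $\delta$ and is therefore PSC. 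As $\mathbf{CS}(\bar A,\bar b)$ is the complement of the proper variety $\{P(\theta,0)=0\}$ (proper because $(\bar A,\bar b)$ is structurally controllable) and thus has full measure, it follows that almost all controllable realizations are PSC, which is the second alternative; the two cases are mutually exclusive and exhaustive.

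The routine parts are the parametrization and the coefficient splitting. I expect the main obstacle to be the careful use of the polynomial-zero fact over $\mathbb{C}^{k}$: one must verify that ``no admissible perturbation renders the system uncontrollable'' is exactly ``$P(\theta,\cdot)$ is a nonzero constant,'' which hinges on the perturbations being complex-valued and on $\mathbf{S}_{\bar F}$ being parametrized by all of $\mathbb{C}^{k}$, so that the quantifier over $\delta$ really sweeps all of $\mathbb{C}^{k}$ rather than a real or otherwise constrained subset. Over $\mathbb{R}$ this equivalence can fail, which is precisely why the complex-valued perturbation hypothesis underlies the clean dichotomy.
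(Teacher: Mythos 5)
Your proof is correct and follows essentially the same route as the paper's: both reduce PTC of a fixed controllable realization to the statement that $\det\mathcal{C}(A+\Delta A,\,b+\Delta b)$, viewed as a polynomial in the perturbation parameters, is a (nonzero) constant, invoke the fundamental theorem of algebra (valid precisely because the perturbations are complex-valued) to produce an uncontrollable perturbation otherwise, and obtain the dichotomy from the fact that the relevant coefficient polynomials in the system parameters vanish only on a proper algebraic variety of zero Lebesgue measure. The only organizational difference is that you expand the determinant additively in monomials $\delta^{\alpha}$ with coefficients $c_{\alpha}(\theta)$, whereas the paper groups irreducible factors into a product $f(p)g(\bar p)h(p,\bar p)$ and isolates a single perturbation variable $\bar p_j$ of positive degree.
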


\begin{proof} Let $p_1,...,p_r$ be  variables that the $r$ nonzero entries of $[A,b]$ take, and $\bar p_1,...,\bar p_l$ be variables that the $l$  perturbed entries of $[\Delta A, \Delta b]$ take. Denote by $p\doteq (p_1,...,p_r)$ and $\bar p\doteq (\bar p_1,...,\bar p_l)$. It turns out $\det {\cal C}(A+\Delta A, b+\Delta b)$ can be expressed as
\begin{equation} \label{determinant_of_Ab}
 \det {\cal C}(A+\Delta A, b+\Delta b)=f(p)g(\bar p)h(p,\bar p),
\end{equation}
where $f(p)$ (resp. $g(\bar p)$) denotes the polynomial of $p$ (resp. $\bar p$) with real coefficients, and $h(p,\bar p)$ denotes the polynomial of $p$ and $\bar p$ where at least one $p_i$ ($i\in \{1,...,r\}$) and one $\bar p_j$ ($j\in \{1,...,l\}$) both have a term with degree no less than one.

It can be seen that, if neither $g(\bar p)$ nor $h(p,\bar p)$ exists in (\ref{determinant_of_Ab}), then for all $(A,b)\in {\bf CS}(\bar A,\bar b)$, $(A,b)$ is PTC w.r.t. ${\bar F}$, as in this case, (\ref{determinant_of_Ab}) is independent of $\bar p$. Otherwise, suppose that there exists a $\bar p_j$, $j\in\{1,...,l\}$, that has a term $\bar p_j^k$ with degree $k\ge 1$ in $g(\bar p)h(p,\bar p)$. When $p$ and $\bar p \backslash \{\bar p_j\}$ take such values that the coefficient of $\bar p_j^k$ is nonzero (Condition 1)), and meanwhile $(A,b)$ is controllable (Condition 2)), then according to the fundamental theorem of algebra (c.f. \cite{dummit2004abstract}), there exists a complex value of $\bar p_j$ making $g(\bar p)h(p,\bar p)=0$, thus making $\det {\cal C}(A+\Delta A, b+\Delta b)=0$, leading to the uncontrollability of $(A+\Delta A, b+\Delta b)$. Note that the set of values for $p$ validating Condition 1) forms a proper algebraic variety in ${\mathbb C}^{r}$, which thereby has zero Lebesgue measure in the set of values satisfying Condition 2). This proves the proposed statements.
\end{proof}

Proposition \ref{generic_PSC} indicates that it is the structure of $(A,b)$ and ${\bar F}$ that domains the property of being PTC or PSC. This motivates the following definition for structural controllability under structured numerical perturbations: % for a system w.r.t. a prescribed perturbation structure.

\begin{definition}[PTSC]\label{DefinitionForPTSC} Given $\bar A\in \{0,*\}^{n\times n}$, $\bar b\in \{0,*\}^{n\times 1}$, and a perturbation structure ${\bar F}\in \{0,*\}^{n\times (n+1)}$, $(\bar A, \bar b)$ is said to be PTSC with respect to ${\bar F}$, if for all $(A,b)\in {\bf CS}(\bar A,\bar b)$, $(A,b)$ is PTC w.r.t. ${\bar F}$.
\end{definition}

If $(\bar A,\bar b)$ is structurally controllable but not PTSC w.r.t. ${\bar F}$, we will alternatively say that $(\bar A,\bar b)$ is perturbation-sensitively structurally controllable (PSSC) w.r.t. ${\bar F}$. From Proposition \ref{generic_PSC}, PSSC of $(\bar A, \bar b)$ indicates that almost all controllable realizations of $(\bar A, \bar b)$ are PSC w.r.t. the perturbation structure ${\bar F}$.

\begin{example}\label{exp1}Consider a system $(A,b)$ with
$$
A=\left[
    \begin{array}{cccc}
      0 & 0 & 0 & 0 \\
      a & 0 & 0 & 0 \\
      0 & c & 0 & 0 \\
      f & d & 0 & e \\
    \end{array}
  \right],b=\left[
              \begin{array}{c}
                h \\
                0 \\
                0 \\
                0 \\
              \end{array}
            \right].
$$Two perturbations $[\Delta A_i, \Delta b_i]$ ($i=1,2$) are given as {\small
$$\left[
    \begin{array}{ccccc}
      0 & 0 & g & l & 0 \\
      0 & 0 & 0 & 0 & 0 \\
      0 & 0 & 0 & 0 & 0 \\
      0 & 0 & 0 & 0 & 0 \\
    \end{array}
  \right],\left[
    \begin{array}{ccccc}
      0 & 0 & 0 & 0 & 0 \\
      0 & 0 & 0 & 0 & 0 \\
      0 & 0 & s & 0 & 0 \\
      0 & 0 & 0 & 0 & r \\
    \end{array}
  \right].
$$}It can be obtained $\det {\cal C}(A+\Delta A_1, b+\Delta b_1)=a^2ch^4(fe^2 + ade)$, which is independent of the parameters in $[\Delta A_1, \Delta b_1]$. This indicates whatever values $[\Delta A_1, \Delta b_1]$ may take, the perturbed system is always controllable (as long as $(A,b)$ is controllable). On the other hand,
$\det {\cal C}(A+\Delta A_2, b+\Delta b_2)=a^2ch^3(e^3r - e^2rs + e^2fh - efhs + adeh - adhs)$. It can be seen that, in case $e^3 - se^2\ne 0$ (or $re^2+fhe+adh\ne 0$) and $(A,b)$ is controllable, there is a value for $r$ (resp. $s$) making $\det {\cal C}(A+\Delta A_2, b+\Delta b_2)=0$, leading to uncontrollability.

\end{example}

\begin{remark}  We remark that
for multi-input systems,  PTC is still a generic property, but one should modify `all' to `almost all' in the last sentence of Definition \ref{DefinitionForPTSC} in defining the associated PTSC. Please see \cite{full-version-tac} for more details.
\end{remark}

\subsection{Relations with SSC}
We may also revisit PTSC from the standpoint of the {\emph{perturbed structured system}} $[\bar A, \bar b]\vee \bar F$. In this system, entries of system matrices can be divided into three categories, namely, the {\emph{fixed zero entries}}, the {\emph{unknown
generic entries}} which take fixed but unknown values (they can be seen as randomly generated values), and the {\emph{perturbed entries}} which can take arbitrarily complex values. PTSC of the perturbed structured system requires that for almost all values of the unknown generic entries making the original system controllable, the corresponding perturbed systems are controllable for arbitrary values of the perturbed entries.

 Recall that SSC is defined as follows.
\begin{definition}[SSC] $(\bar A, \bar b)$ is said to be SSC, if every $(A,b)\in \bar {\bf S}_{[\bar A, \bar b]}$ is controllable.
\end{definition}

As mentioned earlier, SSC could be seen as the ability of a system to preserve controllability under perturbations that have the same structure as the system itself, with the constraint that the perturbed entries of the resulting system cannot be zero. It is thus clear that the essential difference between PTSC and SSC lies in two aspects: First, the perturbed entries can take arbitrary values including zero in PTSC, while they must take nonzero values in SSC. Second, in SSC, all nonzero entries can be perturbed, while in PTSC, an arbitrary subset of entries (prescribed by the perturbation structure) can be perturbed, and the remaining entries remain unchanged. Because of them, neither criteria for SSC can be converted to those for PTSC, nor the converse.

\section{Preliminaries and Terminologies}
In this section, we introduce some preliminaries as well as terminologies in graph theory and structural controllability.
\subsection{Graph Theory}
If not specified, all graphs in this paper refer to directed graphs. A graph is denoted by ${\cal G}=({\cal V},{\cal E})$, where ${\cal V}$ is the vertex set, and ${\cal E}\subseteq {\cal V}\times {\cal V}$ is the edge set. For a graph ${\cal G}=({\cal V},{\cal E})$ with $N$ vertices, a path from vertex $v_i$ to vertex $v_j$ is a sequence of edges $(v_i,v_{i+1})$, $(v_{i+1},v_{i+2})$, $\cdots$, $(v_{j-1},v_{j})$ where each edge belongs to $\cal E$.  For a set ${\cal E}_s\subseteq {\cal E}$, ${\cal G}-{{\cal E}_s}$ denotes the graph obtained from $\cal G$ after deleting the edges in ${\cal E}_s$; similarly, for ${\cal V}_s\subseteq {\cal V}$, ${\cal G}-{\cal V}_s$ denotes the graph after deleting vertices in ${\cal V}_s$ and all edges incident to vertices in ${\cal V}_s$. For two graphs ${\cal G}_i=({\cal V}, {\cal E}_i)$, $i=1,2$, ${\cal G}_1\cup {\cal G}_2$ denotes the graph $({\cal V}, {\cal E}_1\cup {\cal E}_2)$.

A graph ${\cal G}=({\cal V},{\cal E})$ is said to be bipartite if its vertex set can be divided into two disjoint parts ${\cal V}_1$ and ${\cal V}_2$ such that no edge has two ends within ${\cal V}_1$ or ${\cal V}_2$. The bipartite graph $\cal G$ is also denoted by $({\cal V}_1, {\cal V}_2, {\cal E})$.  A matching of a bipartite graph is a subset of its edges among which any two do not share a common vertex. The maximum matching is the matching with the largest number of edges among all possible matchings. The number of edges contained in a maximum matching of a bipartite graph $\cal G$ is denoted by ${\rm mt}({\cal G})$. For a weighted bipartite graph $\cal G$, where each edge is assigned a non-negative weight, the weight of a matching is the sum of all edges contained in this matching. The minimum weight maximum matching (resp. maximum weight maximum matching) is the minimal weight (resp. maximal weight) over all maximum matchings of ${\cal G}$.

The generic rank of a structured matrix $\bar M$, given by  ${\rm grank}(\bar M)$, is the maximum rank $\bar M$ can achieve as the function of its free parameters.
The bipartite graph associated with a structured matrix $\bar M$ is given by ${\mathcal B}(\bar M)=(R,C,{\mathcal E})$, where the left (resp. right) vertex set $R$ ($C$) corresponds to the row index (column index) set of $\bar M$, and the edge set corresponds to the set of nonzero entries of $\bar M$, i.e., ${\mathcal E}=\{(i,j): i\in R, j\in C, \bar M_{ij}\ne 0\}$. It is known that, ${\rm grank}(\bar M)$ equals the cardinality of the maximum matching of ${\mathcal B}(\bar M)$.

{\subsection{DM-Decomposition}}
Dulmage-Mendelsohn decomposition (DM-decomposition for short) is a unique decomposition of a bipartite graph w.r.t. maximum matchings. Let
${\cal G}=({\cal V}^{+}, {\cal V}^{-}, {\cal E})$ be a bipartite graph. For ${\cal M}\subseteq {\cal E}$, we denote by $V^{+}({\cal M})$ (resp. $V^{-}({\cal M})$) the set of vertices in ${\cal V}^+$
(resp. ${\cal V}^{-}$) incident to edges in ${\cal M}$.  An edge of $\cal G$ is said to be admissible,
if it is contained in some maximum matching of $\cal G$.

\begin{definition}[DM-decomposition,\cite{Murota_Book}] \label{DM-def} The DM-decomposition of a bipartite graph
 ${\cal G}=({\cal V}^{+}, {\cal V}^{-}, {\cal E})$ is to decompose $\cal G$ into subgraphs ${\cal G}_k=({\cal V}^+_k,{\cal V}^-_k,{\cal E}_k)$ ($k=0,1,...,d,\infty$)
 (called DM-components of $\cal G$) satisfying:

 1) ${\cal V}^{\star}=\bigcup \nolimits_{k=0}^{\infty} {\cal V}^{\star}_k$, ${\cal V}^{\star}_i\bigcap {\cal V}^{\star}_j=\emptyset$ for $i\ne j$,  with $\star= +$ and $-$; ${\cal E}_k=\{(v^+,v^-)\in {\cal E}: v^+\in {\cal V}^+_k, v^-\in {\cal V}^-_k\}$;

 2) For $1\le k \le d$ (consistent components): ${\rm mt}({\cal G}_k)=|{\cal V}_k^+|=|{\cal V}_k^-|$, and each $e\in {\cal E}_k$ is admissible in ${\cal G}_k$;
 for $k=0$ (horizontal tail): ${\rm mt}({\cal G}_0)=|{\cal V}_0^-|$, $|{\cal V}_0^+|<|{\cal V}_0^-|$ if ${\cal V}_0^+\ne \emptyset$, and each $e\in {\cal E}_0$ is
 admissible in ${\cal G}_0$; for $k=\infty$ (vertical tail): ${\rm mt}({\cal G}_{\infty})=|{\cal V}^+_\infty|$, $|{\cal V}_\infty^+|>|{\cal V}_\infty^-|$ if ${\cal V}_\infty^-\ne \emptyset$, and each $e\in {\cal E}_\infty$ is
 admissible in ${\cal G}_0$;

 3) %The partial order $\preccurlyeq$ among ${\cal G}_k$ is represented by the existence of edges: ${\cal E}_{kl}=\emptyset$ unless ${\cal G}_k \preceq {\cal G}_l$,
% $1\le k,l\le b$, and ${\cal E}_{kl}\ne \emptyset$ if ${\cal G}_k \prec {\cal G}_l$,
% $1\le k,l\le b$, where ${\cal E}_{kl}=\{e\in {\cal E}: V^+(e)\in {\cal V}_k, V^-(e)\in {\cal V}_l\}$,
${\cal E}_{kl}=\emptyset$ unless $1\le k\le l\le d$, and ${\cal E}_{kl}\ne \emptyset$ only if
 $1\le k\le l\le d$, where ${\cal E}_{kl}=\{e\in {\cal E}: V^+(\{e\})\in {\cal V}_k, V^-(\{e\})\in {\cal V}_l\}$;

 4) ${\cal G}$ cannot be decomposed into more components satisfying conditions 1)-3).
\end{definition}

For an $m\times l$ matrix $M$, the DM-decomposition of ${\cal B}(M)$ into diagraphs ${\cal G}_k=({\cal V}^+_k,{\cal V}^-_k,{\cal E}_k)$ ($k=0,1,...,d,\infty$) corresponds to that there exist two permutation matrices $P\in {\mathbb R}^{m\times m}$ and $Q\in {\mathbb R}^{l\times l}$ satisfying
\begin{equation}\label{DMmatrix}
PMQ=\left[
      \begin{array}{ccccc}
        M_0 & M_{01} & \cdots & M_{0d} & M_{0\infty} \\
        0 & M_1 & \cdots & M_{1d} & M_{1\infty} \\
        0 & 0 & \ddots & \cdots & \cdots \\
        0 & 0 & \cdots & M_d & M_{d\infty} \\
        0 & 0 & \cdots & 0 & M_\infty \\
      \end{array}
    \right],
\end{equation}where the submatrix $M_k=M[{\cal V}^+_k,{\cal V}^-_k]$ corresponds to ${\cal G}_k$ ($k=0,1,...,d,\infty$). Matrix (\ref{DMmatrix}) is also called the DM-decomposition of $M$.

A bipartite graph is said to be DM-irreducible if it cannot be decomposed into more than one nonempty component in the DM-decomposition.
 DM-decomposition is closely related to the irreducibility of the determinant of a generic matrix. Recall that a multivariable polynomial $f$ is irreducible if it cannot be factored as $f=f_1f_2$ with $f_1,f_2$ being polynomials with smaller degrees than $f$.

\begin{lemma} \citep[Theorems 2.2.24, 2.2.28]{Murota_Book}\label{reduciblility} For a bipartite graph ${\cal G}(M)=({\cal V}^+ ,{\cal V}^- ,{\cal E}(M))$ associated with a generic square matrix $M$, the following conditions are equivalent:

1) ${\cal G}(M)$ is DM-irreducible;

2) ${\rm mt}({\cal G}(M)-\{v_1,v_2\})= {\rm mt}({\cal G}(M))-1$ for any $v_1\in {\cal V}^+$ and $v_2\in {\cal V}^-$;

3) $\det M$ is irreducible.
\end{lemma}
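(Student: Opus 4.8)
The plan is to establish the two equivalences $(1)\Leftrightarrow(3)$ and $(1)\Leftrightarrow(2)$, using throughout the basic genericity fact underlying the whole excerpt: since the nonzero entries of $M$ are algebraically independent, the Leibniz expansion of $\det M$ suffers no cancellation, so each nonvanishing term corresponds to a distinct perfect matching of ${\cal G}(M)$ and contributes a distinct square-free monomial. Two consequences I would record first. First, $\det M\not\equiv 0$ iff ${\rm mt}({\cal G}(M))=n$, so I may assume a perfect matching exists (the rank-deficient case $\det M\equiv 0$, where all three conditions fail, is disposed of separately). Second, $\det M$ is \emph{multilinear}: every variable $M_{ij}$ occurs with degree at most one.

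For the implication $(3)\Rightarrow(1)$ I would argue by contraposition, which is the easy direction. If ${\cal G}(M)$ is DM-reducible then, having a perfect matching (so with empty tails ${\cal G}_0,{\cal G}_\infty$ in Definition \ref{DM-def}), its DM-decomposition has $d\ge 2$ consistent components; the block-triangular form (\ref{DMmatrix}) then gives $\det M=\pm\prod_{k=1}^{d}\det M_k$, where each $\det M_k$ has positive degree in its own disjoint parameter set, so $\det M$ is reducible.

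The reverse implication $(1)\Rightarrow(3)$ is the main obstacle, and I would again argue by contraposition: assume $\det M=f_1f_2$ nontrivially and reconstruct a block structure. Multilinearity forces the variable sets of $f_1$ and $f_2$ to be disjoint, since a shared variable would appear quadratically in the product; hence the edges split as ${\cal E}={\cal E}_1\sqcup{\cal E}_2\sqcup{\cal E}_0$, where ${\cal E}_i$ carries the variables of $f_i$ and ${\cal E}_0$ collects the inadmissible edges (in no perfect matching, hence absent from $\det M$). The key device is to grade the polynomial ring by $\mathbb Z^{{\cal V}^+}$, giving $M_{ij}$ the row-indicator of $i$; because every monomial of $\det M$ uses exactly one variable per row, $\det M$ is homogeneous of multidegree $\mathbf 1$, and since this grading is by a free abelian group, each factor $f_i$ is itself homogeneous. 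Thus $f_1$ involves variables only from a fixed row set ${\cal V}^+_1$ and $f_2$ only from ${\cal V}^+_2={\cal V}^+\setminus{\cal V}^+_1$; the analogous column grading splits ${\cal V}^-={\cal V}^-_1\sqcup{\cal V}^-_2$ with $|{\cal V}^+_i|=|{\cal V}^-_i|\ge 1$. Consequently every edge of ${\cal E}_1$ joins ${\cal V}^+_1$ to ${\cal V}^-_1$ and every edge of ${\cal E}_2$ joins ${\cal V}^+_2$ to ${\cal V}^-_2$. I would then observe that any admissible edge crossing between the two blocks would lie in some perfect matching, hence in $\det M$, hence in ${\cal E}_1\cup{\cal E}_2$, which is impossible; so all admissible edges stay within the two blocks. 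The admissible-edge graph therefore separates into two nonempty vertex-disjoint parts, each carrying a perfect matching, and since the DM-decomposition refines this separation it has at least two components, i.e., ${\cal G}(M)$ is DM-reducible.

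For $(1)\Leftrightarrow(2)$ I would proceed purely combinatorially. Fix a perfect matching $\pi_0$ of ${\cal G}(M)$ and form the auxiliary digraph on the matched pairs, orienting matching edges opposite to the remaining edges. Deleting $v_1\in{\cal V}^+$ and $v_2\in{\cal V}^-$ leaves $\pi_0$ exposing exactly the partners $r_0=\pi_0^{-1}(v_2)$ and $c_0=\pi_0(v_1)$, and ${\rm mt}({\cal G}(M)-\{v_1,v_2\})=n-1$ holds iff these two can be rematched, i.e., iff a $\pi_0$-alternating path runs from $r_0$ to $c_0$. Demanding this for every pair $(v_1,v_2)$ is exactly strong connectivity of the auxiliary digraph, which is the classical graph-theoretic characterization of DM-irreducibility, giving $(2)\Leftrightarrow(1)$. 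I expect the grading-plus-UFD step in $(1)\Rightarrow(3)$, together with the bookkeeping that keeps inadmissible edges from spuriously merging the blocks, to be the most delicate part; the alternating-path equivalence, though standard, still requires the explicit correspondence between matching augmentation and reachability to be written out.
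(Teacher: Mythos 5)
The paper does not prove this lemma at all: it is imported verbatim from Murota's monograph (Theorems 2.2.24 and 2.2.28), so there is no in-paper argument to compare against. Judged on its own, your sketch is correct and follows what is essentially the classical route (going back to Frobenius for the irreducibility of the determinant of a fully indecomposable generic matrix). The load-bearing steps hold up: no cancellation in the Leibniz expansion plus multilinearity forces any nontrivial factorization $\det M=f_1f_2$ to have variable-disjoint factors; the $\mathbb{Z}^{{\cal V}^+}$- and $\mathbb{Z}^{{\cal V}^-}$-gradings (under which $\det M$ is homogeneous of multidegree $\mathbf 1$, and each factor is homogeneous because the grading group is torsion-free, hence orderable) split rows and columns into two nonempty blocks that absorb every admissible edge; and the converse direction is just the block-triangular product formula $\det M=\pm\prod_k\det M_k$. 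Two places deserve explicit care if you write this out in full. First, in $(1)\Leftrightarrow(2)$ you appeal to strong connectivity of the auxiliary digraph as "the classical characterization of DM-irreducibility"; relative to the axiomatic Definition~\ref{DM-def} used here, that characterization is itself a theorem (the SCC construction of the consistent components), so it must be proved or cited rather than assumed, otherwise the step is close to circular. Second, the phrase "the DM-decomposition refines this separation" at the end of $(1)\Rightarrow(3)$ needs the observation that each consistent component is connected through edges admissible in the whole graph, so that every component lies entirely inside one of your two blocks. Neither point is a substantive gap; both are standard bookkeeping, and the degenerate case $\det M\equiv 0$ is correctly set aside.
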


\subsection{Structural Controllability}
Given $(\bar A,\bar b)$, let $\mathcal{X}$, $\mathcal{U}$ denote the sets of state vertices and input vertices respectively, i.e., $\mathcal{X}=\{x_1,...,x_n\}$, $\mathcal{U}=\{{x_{n+1}}\}$. Denote the edges by $\mathcal{E}_{\mathcal{X},\mathcal{X}}(\bar A)=\{(x_i,x_j): \bar A_{ji}\ne 0\}$, $\mathcal{E}_{\mathcal{U},\mathcal{X}}(\bar b)=\{({x_{n+1}},x_i): \bar b_{i}\ne 0\}$.  Let $\mathcal{G}(\bar A,\bar b)= (\mathcal{X}\cup \mathcal{U}, \mathcal{E}_{\mathcal{X},\mathcal{X}}(\bar A)\cup \mathcal{E}_{\mathcal{U},\mathcal{X}}(\bar b))$ be the system graph associated with $(\bar A, \bar b)$. A state vertex $x\in\mathcal{X}$ is said to be input-reachable, if there exists a path from an input vertex ${x_{n+1}}$ to $x$ in $\mathcal{G}(\bar A,\bar b)$. Similarly, ${\cal G}({\bar F})=({\cal X}\cup {\cal U}, {\cal E}_{\bar F})$ denotes the graph associated with the perturbation structure ${\bar F}$, where ${\cal E}_{\bar F}=\{(x_i,x_j): {\bar F}_{ji}\ne 0, 1\le j\le n, 1\le i \le n+1\}$.
\begin{lemma}\cite{Murota_Book} \label{theo-strucon} System $(\bar A, \bar b)$ in (\ref{plant}) is structurally controllable, if and only if
1) every state vertex $x\in {\cal X}$ is input-reachable; 2) ${\rm grank}([\bar A,\bar b])=n$.
\end{lemma}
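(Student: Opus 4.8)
The plan is to prove both directions through the Popov--Belevitch--Hautus (PBH) rank test: a realization $(A,b)$ is controllable if and only if ${\rm rank}\,[\,\lambda I - A,\ b\,]=n$ for every $\lambda\in\mathbb{C}$. I will also lean on the genericity of controllability noted before Lemma~\ref{theo-strucon}, by which $(\bar A,\bar b)$ is structurally controllable exactly when $\det{\cal C}(A,b)$ is not the identically-zero polynomial in the free parameters, equivalently when a single controllable realization exists. For necessity I use the determinantal form, and for sufficiency the existence form.

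For necessity I argue by contraposition on each condition. If condition 2) fails, then ${\rm grank}([\bar A,\bar b])<n$, so \emph{every} realization admits a nonzero left vector $w$ with $w^{\top}[A,b]=0$; then $w^{\top}A^{k}b=0$ for all $k$, hence $w^{\top}{\cal C}(A,b)=0$ and no realization is controllable. If condition 1) fails, let ${\cal S}\subseteq{\cal X}$ be the nonempty set of state vertices that are not input-reachable. By the defining property of ${\cal S}$ it has no incoming edge from $({\cal X}\cup{\cal U})\setminus{\cal S}$, which translated to matrix entries says, for every realization, $b_i=0$ for $i\in{\cal S}$ and $A_{ij}=0$ whenever $i\in{\cal S}$ and $j\notin{\cal S}$. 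Consequently, if $v$ is any left eigenvector of the principal submatrix $A[{\cal S},{\cal S}]$ with eigenvalue $\mu$, its zero-extension $w$ to all of ${\cal X}$ satisfies $w^{\top}A=\mu w^{\top}$ and $w^{\top}b=0$; the PBH test then certifies uncontrollability of every realization. Thus both conditions are necessary.

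For sufficiency it suffices, by the genericity remark, to exhibit \emph{one} controllable realization. Condition 2) guarantees a maximum matching of size $n$ in ${\cal B}([\bar A,\bar b])$ saturating all $n$ rows, while condition 1) guarantees that the input edge set is rich enough to reach every state. My plan is to organize this matching, together with input-reachability, into a spanning cactus on ${\cal G}(\bar A,\bar b)$: a stem rooted at the input together with a disjoint family of cycles (buds), each attached to the stem by a distinguishing edge. I then assign distinct generic nonzero weights to the cactus edges and leave all other $*$-entries at zero, producing a specific $(A_0,b_0)\in{\bf S}_{[\bar A,\bar b]}$, and verify ${\rm rank}\,[\,\lambda I - A_0,\ b_0\,]=n$ for all $\lambda$ directly: the stem forces full rank along the reachable chain, while each bud contributes its own generically distinct eigenvalue that is excited through its connecting edge, so the maximal minors of $[\,\lambda I-A_0,\ b_0\,]$ have no common root.

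The main obstacle is the sufficiency construction, namely passing from the abstract size-$n$ matching of condition 2) together with the reachability of condition 1) to an explicit spanning cactus, and then proving its realization is controllable for \emph{every} $\lambda$. The delicate points are the case in which the single unmatched column of $[\bar A,\bar b]$ is a state column (so that $b$ must enter the matching, forcing the stem to originate at the input) and the case of a matched permutation that already closes into cycles, where reachability must be used to graft each cycle onto the stem via an augmenting-path argument. I expect to discharge this either by the PBH check sketched above or, more robustly, by showing that $\det{\cal C}(A_0,b_0)$ contains a monomial that cannot be cancelled---an argument naturally phrased through the matching and the DM-decomposition of Definition~\ref{DM-def} together with the irreducibility characterization in Lemma~\ref{reduciblility}.
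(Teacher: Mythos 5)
The paper offers no proof of this lemma---it is quoted verbatim from \cite{Murota_Book} as a known result---so there is no in-paper argument to compare against; I can only judge your attempt on its own terms. Your necessity direction is complete and correct: the left-kernel argument when ${\rm grank}([\bar A,\bar b])<n$, and the zero-extended left eigenvector of $A[{\cal S},{\cal S}]$ when the non-reachable set ${\cal S}$ is nonempty (your translation of ``no incoming edge into ${\cal S}$'' into $b_i=0$, $A_{ij}=0$ for $i\in{\cal S}$, $j\notin{\cal S}$ is consistent with the paper's edge convention $\mathcal{E}_{\mathcal{X},\mathcal{X}}(\bar A)=\{(x_i,x_j):\bar A_{ji}\ne 0\}$), both check out against the PBH test.

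The sufficiency direction, however, has a genuine gap: the two steps you defer---(i) turning the size-$n$ matching of condition 2) together with input-reachability into a spanning cactus, and (ii) verifying that a generic-weight realization of a cactus passes the PBH test at every $\lambda$---are not side issues but the entire content of the hard half of the theorem (this is precisely Lin's cactus characterization). Step (i) in particular does not follow by simply overlaying the matching edges with reachability paths: the matching, viewed in ${\cal G}(\bar A,\bar b)$, decomposes into vertex-disjoint paths and cycles, and the union with reachability paths is in general not a cactus; one must induct on the cycle components and re-route the matching along augmenting paths to attach each cycle by a single distinguished edge, which you name but do not execute. Step (ii) likewise needs an actual computation (e.g., the bud of length $\ell$ with edge-weight product $c$ has the $\ell$-th roots of $c$ as eigenvalues, generically disjoint across buds and from the nilpotent stem, with the distinguished edge guaranteeing $\mathrm{rank}[\lambda I-A_0,\,b_0]=n$ at each such $\lambda$). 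Your alternative suggestion---exhibiting a non-cancellable monomial in $\det{\cal C}(A_0,b_0)$ or arguing through the DM-decomposition and Lemma~\ref{reduciblility}---is actually closer to how the cited source proves the result and would likely be the cleaner path to close the gap, but as written the proposal establishes only one implication.
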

\vspace{-0.1cm}
\section{Necessary and Sufficient Condition}
In this section, we present a necessary and sufficient condition for the PTSC in the single-input case.
\vspace{-0.2cm}
\subsection{One-edge Preservation Principle}
At first, a one-edge preservation principle is given as follows, which is fundamental to our subsequent derivations.

\begin{proposition}[One-edge preservation principle] \label{OneEdgeEquivalence} Suppose $(\bar A, \bar b)$ is structurally controllable. $(\bar A,\bar b)$ is PSSC w.r.t. $\bar F$, if and only if there is one edge $e\in {\cal E}_{\bar F}$, such that $[\bar A, \bar b]\vee \bar F^{\{e\}}$ is PSSC w.r.t. ${\bar F}_{\{e\}}$, where $\bar F^{\{e\}}$ denotes the structured matrix associated with the graph ${\cal G}(\bar F)-\{e\}$, and ${\bar F}_{\{e\}}$ the structured matrix obtained from ${\bar F}$ by preserving only the entry corresponding to $e$. % $[\bar A, \bar b]\vee {\bar F}^{\{e\}}$
\end{proposition}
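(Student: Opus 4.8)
The plan is to reduce both directions of the claimed equivalence to one and the same condition on the controllability-matrix determinant, after which the statement becomes an almost immediate ``there exists a coordinate'' assertion. Write $p=(p_1,\dots,p_r)$ for the free parameters of $[\bar A,\bar b]$ and $\bar p=(\bar p_1,\dots,\bar p_l)$ for the perturbation parameters, so that each $\bar p_j$ is attached to exactly one edge $e_j\in {\cal E}_{\bar F}$. The central object is the single polynomial $D(p,\bar p)\doteq \det {\cal C}(A+\Delta A,b+\Delta b)$; crucially, this is a fixed polynomial in the entries of $[A,b]+[\Delta A,\Delta b]$, whose support equals $[\bar A,\bar b]\vee \bar F$, and it does not change with how we decide to label a given entry as \emph{generic} or \emph{perturbed}.

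The first step is to characterize PSSC through $D$. Following the factorization $D=f(p)g(\bar p)h(p,\bar p)$ established in the proof of Proposition \ref{generic_PSC}, I would argue that $(\bar A,\bar b)$ is PTSC w.r.t.\ $\bar F$ exactly when $D$ is independent of $\bar p$ (so that no perturbation can ever alter $\det {\cal C}$), and hence that $(\bar A,\bar b)$ is PSSC w.r.t.\ $\bar F$ if and only if $D$ depends nontrivially on at least one perturbation variable $\bar p_j$. The delicate point is the passage ``$D$ non-constant in $\bar p$ $\Rightarrow$ there is a complex perturbation driving $D$ to $0$,'' which rests on the fundamental theorem of algebra exactly as in Proposition \ref{generic_PSC}, together with the observation that $D(p,0)=\det{\cal C}(A,b)\neq 0$ for controllable realizations, so that such a root does correspond to a genuine loss of controllability for almost all $p$.

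Next I would set up the right-hand side. For each $e\in {\cal E}_{\bar F}$ I must first check that $[\bar A,\bar b]\vee \bar F^{\{e\}}$ is structurally controllable, so that ``PSSC'' is even meaningful for it: since it arises from the structurally controllable $(\bar A,\bar b)$ only by turning extra entries into nonzero generic ones, input-reachability of every state vertex is preserved and ${\rm grank}$ can only increase, whence it remains $n$; Lemma \ref{theo-strucon} then gives structural controllability. Applying the characterization of the previous step to the base system $[\bar A,\bar b]\vee \bar F^{\{e\}}$ under the single-edge perturbation $\bar F_{\{e\}}$, and using that its perturbed controllability determinant is the very same polynomial $D$ (now with the parameters $\bar p_j$ of the remaining edges $e_j\neq e$ playing the role of nonzero generic entries of the base realization), I obtain: $[\bar A,\bar b]\vee \bar F^{\{e\}}$ is PSSC w.r.t.\ $\bar F_{\{e\}}$ if and only if $D$ depends on $\bar p_{e}$. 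It is essential here that those other perturbation entries appear at \emph{generic nonzero} values, which is precisely what makes dependence of $D$ on $\bar p_e$ the correct test, since setting them to zero could spuriously destroy that dependence.

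With both reductions in hand the conclusion is immediate: $(\bar A,\bar b)$ is PSSC w.r.t.\ $\bar F$ iff $D$ depends on some $\bar p_j$, which holds iff there is an edge $e=e_j$ for which $D$ depends on $\bar p_e$, i.e.\ iff $[\bar A,\bar b]\vee\bar F^{\{e\}}$ is PSSC w.r.t.\ $\bar F_{\{e\}}$ for that $e$. I expect the main obstacle to be not this final logical step but the rigorous bookkeeping behind the determinant characterization: verifying that $D$ is genuinely the same polynomial under the two partitions of variables, handling entries where the supports of $[\bar A,\bar b]$ and $\bar F$ overlap (there $D$ depends on the perturbation parameter iff it depends on that entry's value), and confirming that ``$D$ depends on a perturbation variable'' is the exact algebraic counterpart of PSSC via the fundamental-theorem-of-algebra argument.
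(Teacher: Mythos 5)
Your proposal is correct and follows essentially the same route as the paper's own proof: characterize PSSC via the factorization of $\det {\cal C}(A+\Delta A,b+\Delta b)$ from Proposition \ref{generic_PSC} as ``the determinant depends on some perturbation variable $\bar p_j$,'' then observe that this determinant is literally the same polynomial when the remaining perturbation entries are relabeled as generic entries of the augmented base system $[\bar A,\bar b]\vee\bar F^{\{e\}}$. Your added check that $[\bar A,\bar b]\vee\bar F^{\{e\}}$ remains structurally controllable (so that PSSC is well-posed for it) is a detail the paper leaves implicit, but it does not change the argument.
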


\begin{proof} Let $p\doteq (p_1,...,p_r)$ and $\bar p=(\bar p_1,...,\bar p_l)$ be defined similarly in the the proof of Proposition \ref{generic_PSC}. From the analysis in that proof, $(\bar A, \bar b)$ is PSSC w.r.t. ${\bar F}$, if and only if there exists a $\bar p_j$, $j\in \{1,...,l\}$, that has a degree no less than one in $\det {\cal C}(A+\Delta A, b+\Delta b)$ (expressed in (\ref{determinant_of_Ab})),  where $(A,b)$ and $(\Delta A, \Delta b)$ are realizations of $(\bar A, \bar b)$ and $(\Delta A, \Delta b)$ respectively, with the corresponding parameters being $p$ and $\bar p$. Let $e$ be the edge corresponding to $\bar p_j$. Suppose that the coefficient of $\bar p_j^k$ is nonzero for some degree $k\ge 1$. Since the coefficient of $\bar p_j^k$ is a polynomial of $p$ and ${\bar p}\backslash \{\bar p_j\}$ in $\det {\cal C}(A+\Delta A, b+\Delta b)$, it always equals that in $\det {\cal C}(A'+\Delta A', b'+\Delta b')$, where $(A',b')$ has the system graph ${\cal G}(\bar A, \bar b)\cup {\cal G}({\bar F})- \{e\}$, and $[\Delta A',\Delta b']$ corresponds to the perturbation ${\bar F}_{\{e\}}$, noting that $[A+\Delta A, b+\Delta b]=[A'+\Delta A', b'+\Delta b']$ in the symbolic operation sense. Upon observing this, the proposed statement follows immediately.
\end{proof}

It is remarkable that the one-edge preservation principle does {\emph{not}} mean the perturbation of only one entry is enough to destroy controllability; Instead, it means we can regard $|{\cal E}_{\bar F}|-1$ entries of $\bar F$ as unknown generic entries ({\emph{in other words, their values can be chosen randomly; but not fixed zero}}) and find suitable values for the last entry. This principle indicates that verifying the PTSC w.r.t. an arbitrary perturbation structure can be reduced to an equivalent problem with a single-edge perturbation structure. Having observed this, in the following, we will give the conditions for the absence of zero uncontrollable modes and nonzero uncontrollable modes, respectively, in the single-edge perturbation scenario. Recall that an uncontrollable mode for $(A,B)$ is a $\lambda \in {\mathbb C}$ making ${\rm rank}([A-\lambda I, B])<n$. Then, based on Proposition \ref{OneEdgeEquivalence}, conditions for PTSC with a general perturbation structure will be obtained.

\subsection{Condition for Zero Mode}
Let $\bar H\doteq [\bar A, \bar b]$. For $j\in \{1,...,n+1\}$, let $r_j={\rm grank}(\bar H[{\cal J}_{n},{\cal J}_{n+1}\backslash \{j\}])$. Define  sets ${\cal I}_j$ and ${\cal I}^*_j$ as $$\begin{array}{c}  \begin{aligned}{\cal I}_j&=\left\{{\cal I}\subseteq {\cal J}_n: {\rm grank} (\bar H[{\cal I},{\cal J}_{n+1}\backslash \{j\}])=r_j, |{\cal I}|=r_j\right\},\\ {\cal I}^{*}_j&=\left\{{\cal J}_n\backslash w: w\in {\cal I}_j\right\}.  \end{aligned}\end{array} $$Based on these definitions, the following proposition gives a necessary and sufficient condition for the absence of zero uncontrollable modes in the single-edge perturbation scenario.

\begin{proposition}\label{ZeroCondition} Suppose that $(\bar A, \bar b)$ is structurally controllable, and there is only one nonzero entry in ${\bar F}$ with its position being $(i,j)$. Then, for almost all $(A,b)\in {\bf CS}(\bar A,\bar b)$, there is no $(\Delta A, \Delta b)\in {\bf S}_{{\bar F}}$ such that a nonzero $n$-vector $q$ exists making $q^{\intercal}[A+\Delta A, b+\Delta b]=0$, if and only if $i\notin {\cal I}^{*}_j$.
\end{proposition}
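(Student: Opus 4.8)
The plan is to translate the algebraic condition into a question about the left null space of a single $n\times n$ submatrix, and then to read off genericity from a cofactor (minor) identity. First I would note that $q^{\intercal}[A+\Delta A,b+\Delta b]=0$ with $q\ne 0$ is exactly the statement that the $n\times(n+1)$ matrix $[A+\Delta A,b+\Delta b]$ has rank less than $n$, i.e. that $\lambda=0$ is an uncontrollable mode. Since ${\bar F}$ has its only nonzero entry at $(i,j)$, every $(\Delta A,\Delta b)\in{\bf S}_{\bar F}$ adds a single free complex scalar $\delta$ to the $(i,j)$-entry, hence modifies only the $j$-th column of $M\doteq[A,b]$. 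Writing out $q^{\intercal}[A+\Delta A,b+\Delta b]=0$ column by column, the equations for the columns $m\ne j$ read $q^{\intercal}M_{\cdot m}=0$, forcing $q$ to be a left null vector of $N\doteq M[{\cal J}_n,{\cal J}_{n+1}\backslash\{j\}]$, while the $j$-th column gives $q^{\intercal}M_{\cdot j}+\delta q_i=0$, which merely fixes $\delta$ once $q_i\ne 0$. Because $(A,b)$ is controllable, the PBH test at $\lambda=0$ gives ${\rm rank}[A,b]=n$, so $M$ has no nonzero left null vector; consequently $q_i=0$ would force $q^{\intercal}M=0$ and hence $q=0$. Therefore a zero-mode-creating perturbation exists if and only if $N$ admits a left null vector $q$ with $q_i\ne 0$.

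Next I would split on $r_j$. Structural controllability gives ${\rm grank}[\bar A,\bar b]=n$, so deleting one column yields $r_j\in\{n-1,n\}$. If $r_j=n$, then $\det N\ne 0$ for almost all realizations, the left null space of $N$ is trivial, and no perturbation exists; this is consistent with ${\cal I}^*_j=\{\emptyset\}$, so that $i\notin{\cal I}^*_j$ holds for every $i$. If $r_j=n-1$, then for almost all $(A,b)$ the left null space of $N$ is one-dimensional, spanned by some $q^*$, and by the previous paragraph a perturbation exists if and only if $q^*_i\ne 0$.

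The heart of the argument is to express $q^*_i$ and read off its generic vanishing. Using the adjugate identity ${\rm adj}(N)\,N=\det(N)I=0$, each row of ${\rm adj}(N)$ is a left null vector of $N$, so up to scaling $q^*_s\propto(-1)^{s+t}\det\!\big(N[{\cal J}_n\backslash\{s\},{\cal J}_n\backslash\{t\}]\big)$ for a suitable column index $t$. In particular $q^*_i$ is, up to sign and a common scalar, an $(n-1)\times(n-1)$ minor of $N$ that omits row $i$. Hence $q^*_i\not\equiv 0$ as a polynomial in the parameters of $[A,b]$ if and only if the row-deleted block $\bar H[{\cal J}_n\backslash\{i\},{\cal J}_{n+1}\backslash\{j\}]$ still contains a nonvanishing $(n-1)$-minor, i.e. ${\rm grank}\big(\bar H[{\cal J}_n\backslash\{i\},{\cal J}_{n+1}\backslash\{j\}]\big)=n-1=r_j$. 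By the definition of ${\cal I}_j$ this says precisely ${\cal J}_n\backslash\{i\}\in{\cal I}_j$, that is, $i\in{\cal I}^*_j$; equivalently $q^*_i\equiv 0$ (the row-deleted grank drops to $n-2$) exactly when $i\notin{\cal I}^*_j$.

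Finally I would close the genericity loop. The quantity $q^*_i$ is a polynomial in the free parameters $p$ of $[A,b]$: it is either identically zero, in which case it vanishes on all of ${\bf CS}(\bar A,\bar b)$, or it is not, in which case it is nonzero off a proper algebraic variety whose complement meets the full-measure set ${\bf CS}(\bar A,\bar b)$ in a full-measure subset. Combining the two previous paragraphs, for almost all $(A,b)\in{\bf CS}(\bar A,\bar b)$ there is no $(\Delta A,\Delta b)\in{\bf S}_{\bar F}$ creating a zero uncontrollable mode if and only if $q^*_i\equiv 0$, i.e. if and only if $i\notin{\cal I}^*_j$, which also covers the case $r_j=n$. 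The main obstacle I anticipate is the cofactor/grank lemma of the third paragraph, namely rigorously equating the generic vanishing of the null-vector component $q^*_i$ with the drop of ${\rm grank}$ after deleting both row $i$ and column $j$, together with the bookkeeping needed to guarantee that ${\rm rank}={\rm grank}$ off a measure-zero set, so that all combinatorial (grank) conditions transfer to a generic controllable realization.
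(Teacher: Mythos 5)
Your proposal is correct and follows essentially the same route as the paper: you reduce the question to whether the left null vector of $[A,b][{\cal J}_n,{\cal J}_{n+1}\backslash\{j\}]$ has a nonzero $i$th component, characterize that via the full-row-rank of the row-$i$-deleted submatrix, and transfer the condition to almost all controllable realizations by genericity of polynomial (non)vanishing. The only cosmetic difference is that you prove the key null-vector/rank equivalence via the adjugate identity, whereas the paper isolates it as Lemma~\ref{basiclemma} and omits the (standard) proof.
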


To prove Proposition \ref{ZeroCondition}, we need the following lemma.

\begin{lemma}\label{basiclemma} Given a matrix $H\in {\mathbb C}^{p\times q}$ of rank $p-1$, let $x\in {\mathbb C}^{p}$ be a nonzero vector in the left null space of $H$. Then, for any $i\in {\cal J}_p$,  $x_i\ne 0$, if and only if $H[{{\cal J}_p\backslash \{i\},{\cal J}_q}]$ is  of full row rank.
%suppose that $T\in {\mathbb C}^{r\times p}$ consists of the maximum number of independent {\emph{row}} vectors which span the left null space of $H$. Then, for any ${\cal I}\subseteq {\cal J}_p$, $T[{\cal J}_r, {\cal I}]$ is full of row rank, if and only if $H[{{\cal J}_p\backslash {\cal I},{\cal J}_q}]$ is full of row rank.
\end{lemma}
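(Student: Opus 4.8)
The plan is to leverage the rank hypothesis to pin down the left null space of $H$ exactly. Since ${\rm rank}(H)=p-1$ and $H$ has $p$ rows, the left null space $\{y\in{\mathbb C}^p: y^{\intercal}H=0\}$ has dimension $p-(p-1)=1$, hence is spanned by the single vector $x$. In other words, $x$ records the \emph{unique} (up to a nonzero scalar) linear dependency $\sum_{k=1}^{p} x_k\, H[\{k\},{\cal J}_q]=0$ among the rows of $H$. The entire argument then reduces to tracking whether the coefficient $x_i$ attached to the $i$th row is zero or not, since full row rank of $H[{\cal J}_p\backslash\{i\},{\cal J}_q]$ is equivalent to the $p-1$ remaining rows being linearly independent (removing one row from the rank-$(p-1)$ matrix $H$ can only leave the rank at $p-1$ or drop it to $p-2$).

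For the implication $x_i=0\Rightarrow$ rank deficiency, suppose $x_i=0$. Then the dependency $\sum_{k=1}^p x_k H[\{k\},{\cal J}_q]=0$ reads $\sum_{k\in{\cal J}_p\backslash\{i\}} x_k H[\{k\},{\cal J}_q]=0$. Because $x\ne 0$ while $x_i=0$, at least one coefficient $x_k$ with $k\ne i$ is nonzero, so this is a nontrivial linear relation among the rows of $H[{\cal J}_p\backslash\{i\},{\cal J}_q]$. Hence those $p-1$ rows are dependent and $H[{\cal J}_p\backslash\{i\},{\cal J}_q]$ is not of full row rank; equivalently, full row rank forces $x_i\ne 0$.

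For the implication $x_i\ne 0\Rightarrow$ full row rank, suppose $x_i\ne 0$ and assume, for contradiction, that $H[{\cal J}_p\backslash\{i\},{\cal J}_q]$ is rank-deficient. Then there exist scalars $\{y_k\}_{k\ne i}$, not all zero, with $\sum_{k\ne i} y_k H[\{k\},{\cal J}_q]=0$. Setting $y_i=0$ yields a nonzero vector $y\in{\mathbb C}^p$ with $y^{\intercal}H=0$, i.e. $y$ lies in the one-dimensional left null space of $H$. Thus $y=c\,x$ for some $c\ne 0$, which forces $y_i=c\,x_i\ne 0$, contradicting $y_i=0$. Therefore $H[{\cal J}_p\backslash\{i\},{\cal J}_q]$ must have full row rank, and combining the two implications proves the claimed equivalence.

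The argument is essentially a one-dimensional linear-algebra observation, so I do not anticipate a serious obstacle; the only point requiring care is the dimension count establishing that the left null space is \emph{exactly} one-dimensional (not merely nontrivial), since uniqueness of the dependency $x$ up to scaling is precisely what lets me identify the ad hoc null vector $y$ with a multiple of $x$ and thereby close both directions.
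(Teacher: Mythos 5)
Your proof is correct and complete; the paper itself omits the proof of this lemma as ``quite standard,'' and your argument---using the rank hypothesis to show the left null space is exactly one-dimensional and then tracking the coefficient $x_i$ in the unique row dependency in both directions---is precisely the standard argument being alluded to. No gaps; the point you flag as needing care (one-dimensionality, not mere nontriviality, of the null space) is indeed the crux, and you handle it correctly.
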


%Given a matrix $H\in {\mathbb C}^{p\times q}$ of rank $p-1$, let $x\in {\mathbb C}^{}$
%suppose that $T\in {\mathbb C}^{r\times p}$ consists of the maximum number of independent {\emph{row}} vectors which span the left null space of $H$. Then, for any ${\cal I}\subseteq {\cal J}_p$, $T[{\cal J}_r, {\cal I}]$ is full of row rank, if and only if $H[{{\cal J}_p\backslash {\cal I},{\cal J}_q}]$ is full of row rank.

\begin{proof} The proof is quite standard, thus omitted here.
%Without losing any generality, consider $i=1$. Let $h_1=H[\{1\},{\cal J}_q]$, $h_{2:p}=H[{\cal J}_p\backslash \{1\},{\cal J}_q]$, and accordingly, let $x_{2:p}=[x_2,...,x_p]$. By definition,
%\begin{equation}\label{nulleq} x^\intercal H=[x_1,x_{2:p}][h^\intercal_1,h^\intercal_{2:p}]^\intercal=x_1h_1+x_{2:p}h_{2:p}=0.\end{equation}
% {\bf If direction}:  Suppose $h_{2:p}$ is of full  row rank but $x_1=0$. Then, from (\ref{nulleq}), $x_{2:p}h_{2:p}=0$. As $x_2\ne 0$ (otherwise $x=0$), $h_{2:p}$ is of row rank deficient, causing a contradiction. {\bf Only if direction}: Suppose $x_1\ne 0$. If $h_{2:p}$ is of row rank deficient, then there is $y\in {\mathbb C}^{p-1}$ with $y\ne0$ making $y^{\intercal} h_{2:p}=0$. Hence, $[0,y^{\intercal}]^\intercal$ is also in the left null space of $H$. As $x_1\ne 0$, this contradicts the fact $H$ has rank $p-1$.
\end{proof}

%Note that ${\cal J}_p\backslash {\cal I}$ is a complement of ${\cal I}$ in ${\cal J}_p$. Lemma \ref{basiclemma} establishes a correspondence in rank between a subset of columns of $T$ and its complement of rows of $H$.

{\bf {\emph {Proof of Proposition \ref{ZeroCondition}:}}} Sufficiency:  Since $(\bar A, \bar b)$ is structurally controllable, ${\rm grank}(\bar H)= n$, which means that $r_j=n$ or $n-1$. If $r_j=n$, then ${\cal I}^*_j=\emptyset$, which immediately indicates that no $q(\ne 0)$ exists making $q^{\intercal}[A+\Delta A, b+\Delta b]=0$ for almost all $(A,b)\in {\bf CS}(\bar A,\bar b)$. Now suppose $r_j=n-1$. A vector $q (\ne 0)$ making $q^{\intercal}[A+\Delta A, b+\Delta b]$ must lie in the left null space of $H[{\cal J}_{n}, J_{n+1}\backslash \{j\}]$, for almost all $H\in {\bf S}_{\bar H}$. As $r_j=n-1$, for almost all $(A,b)\in {\bf CS}(\bar A,\bar b)$, ${\cal I}^*_j$ consists of all the nonzero positions of $q$ according to Lemma \ref{basiclemma}.  As a result, if $i\notin {\cal I}^*_j$,
$$q^{\intercal}[A+\Delta A, b+\Delta b][{\cal J}_{n},\{j\}]=\sum \nolimits_{k\in {\cal I}^*_j}q_k[A,b]_{kj}\ne 0,$$
where the inequality is due to the fact that $[A,b]$ has full row rank.

Necessity: Assume that $i\in {\cal I}^*_j$. As ${\cal I}^*_j \ne \emptyset$ and $(\bar A, \bar b)$ is structurally controllable, $[A,b][{\cal J}_n,{\cal J}_{n+1}\backslash \{j\}]$ has rank $n-1$ for all $(A,b)\in {\bf CS}(\bar A,\bar b)$. Let $q$ be a nonzero vector in the left null space of  $[A,b][{\cal J}_n,{\cal J}_{n+1}\backslash \{j\}]$.
According to Lemma \ref{basiclemma}, as $i\in {\cal I}^*_j$, we have $q_{i}\ne 0$ for almost all $(A,b)\in {\bf CS}(\bar A,\bar b)$. By setting $[\Delta A, \Delta b]_{ij}=-1/q_i\sum \nolimits_{k\in {\cal I}^*_j\backslash \{i\}}q_k[A,b]_{kj}$, we get $$q^{\intercal}[A+\Delta A, b+\Delta b][{\cal J}_n,\{j\}]=\sum \nolimits_{k\in {\cal I}^*_j}q_k[A,b]_{kj}=0,$$
which makes $q^{\intercal}[A+\Delta A, b+\Delta b]=0$.   $\hfill \Box$

\begin{remark}\label{equal-pro1} From the proof of Proposition \ref{ZeroCondition}, provided $(\bar A, \bar b)$ is structurally controllable, $i\notin {\cal I}_j^*$ is equivalent to that, ${\rm grank}(\bar H[{\cal J}_n, {\cal J}_{n+1}\backslash \{j\}])=n$ (corresponding to $r_j=n$) or ${\rm grank}(\bar H[{\cal J}_n\backslash \{i\}, {\cal J}_{n+1}\backslash \{j\}])=n-2$ (corresponding to $r_j=n-1$ but ${\rm grank}(\bar H[{\cal J}_n\backslash \{i\}, {\cal J}_{n+1}\backslash \{j\}])<n-1$). Moreover, since adding a column to a matrix can increase its rank by at most one,  the latter two conditions are mutually exclusive.
\end{remark}
\vspace{-0.3cm}
\subsection{Condition for Nonzero Mode} \label{sec-nonzero}
In the following, we present a necessary and sufficient condition for the absence of nonzero uncontrollable modes using the DM-decomposition.

For $j\in {\cal J}_{n+1}$, let $j_{\rm c}\doteq {\cal J}_{n+1}\backslash \{j\}$. Moreover, define a generic matrix pencil as $H_\lambda\doteq [\bar A-\lambda I,\bar b]$, $H_\lambda^j\doteq H_{\lambda}[{\cal J}_n,\{j\}]$, and $H_\lambda^{{j_{\rm c}}}\doteq H_{\lambda}[{\cal J}_n,j_c]$. Here, the subscript $\lambda$ indicates a matrix-valued function of $\lambda$.  Let ${\cal B}(H_\lambda)=({\cal V}^+,{\cal V}^-,{\cal E})$ be the bipartite graph associated with $H_\lambda$, where ${\cal V}^+=\{x_1,...,x_n\}$, ${\cal V}^-
=\{v_1,...,v_{n+1}\}$, and ${\cal E}=\{(x_i,v_k): {\cal E}_I\cup {\cal E}_{[\bar A,\bar b]}\}$ with ${\cal E}_I=\{(x_i,v_i):i=1,...,n\}$, ${\cal E}_{[\bar A,\bar b]}=\{(x_i,v_k):
[\bar A,\bar b]_{ik}\ne 0\}$. No parallel edges are included even if ${\cal E}_I\cap {\cal E}_{[\bar A,\bar b]}\ne \emptyset$. An edge is called a $\lambda$-edge if it belongs to ${\cal E}_I$,  and a self-loop if it belongs to ${\cal E}_I\cap {\cal E}_{[\bar A,\bar b]}$. Note by definition, a self-loop is also a $\lambda$-edge. Let ${\cal B}(H_\lambda^{{j_{\rm c}}})$ be the bipartite graph associated with $H_\lambda^{{j_{\rm c}}}$, that is, ${\cal B}(H_\lambda^{{j_{\rm c}}})= {\cal B}(H_{\lambda})-\{v_j\}$.
%({\cal V}^+,{\cal V}^-\backslash \{v_j\}, {\cal E}\backslash {\cal E}_{.j})$ with ${\cal E}_{.j}=\{(x_i,v_j):(x_i,v_j)\in {\cal E}\}$.

\begin{lemma} \label{fullrank}
Suppose $(\bar A,\bar b)$ is structurally controllable. Then ${\rm mt}({\cal B}(H_\lambda^{{j_{\rm c}}}))=n$ for all $j\in {\cal J}_{n+1}$.
\end{lemma}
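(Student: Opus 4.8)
The plan is to prove the equivalent combinatorial statement that the bipartite graph ${\cal B}(H_\lambda^{{j_{\rm c}}})={\cal B}(H_\lambda)-\{v_j\}$ --- which carries the $n$ left vertices $x_1,\dots,x_n$ and exactly the $n$ right vertices $\{v_1,\dots,v_{n+1}\}\backslash\{v_j\}$ --- admits a \emph{perfect} matching. Since a matching here can use at most $n$ edges, a perfect matching forces ${\rm mt}({\cal B}(H_\lambda^{{j_{\rm c}}}))=n$. The idea is to take the $\lambda$-edges ${\cal E}_I=\{(x_i,v_i)\}$ as the backbone of the matching and to repair, by a single augmenting path, the lone deficiency caused by deleting $v_j$, using the input-reachability condition of structural controllability (Lemma \ref{theo-strucon}).

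First I would dispose of the easy case $j=n+1$: deleting $v_{n+1}$ leaves all $n$ $\lambda$-edges $(x_i,v_i)$, $i\in{\cal J}_n$, present, and these by themselves constitute a perfect matching. For $j\le n$, I would start from the size-$(n-1)$ matching $M_1=\{(x_i,v_i):i\in{\cal J}_n\backslash\{j\}\}$, whose only exposed vertices are $x_j$ on the left and $v_{n+1}$ on the right; it then suffices to produce an $M_1$-augmenting path joining $x_j$ to $v_{n+1}$, since augmenting along it raises the matching size to $n$.

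To construct that path I would invoke input-reachability to pick a simple directed path $x_{n+1}\to x_{s_1}\to\cdots\to x_{s_{t-1}}\to x_{s_t}=x_j$ in ${\cal G}(\bar A,\bar b)$, with $s_1,\dots,s_t$ distinct and $s_t=j$. Its edge conditions read $\bar b_{s_1}\ne 0$ and $\bar A_{s_{i+1},s_i}\ne 0$, which translate into the structural (hence non-matching) edges $(x_{s_1},v_{n+1})$ and $(x_{s_{i+1}},v_{s_i})$ of ${\cal B}(H_\lambda)$. Threading these together with the matching $\lambda$-edges $(v_{s_i},x_{s_i})$ produces the alternating walk
$$x_j=x_{s_t}-v_{s_{t-1}}-x_{s_{t-1}}-\cdots-v_{s_1}-x_{s_1}-v_{n+1},$$
which begins and ends at exposed vertices. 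Simplicity of the chosen system path makes all the $x_{s_i}$ and all the $v_{s_i}$ (together with $v_{n+1}$) pairwise distinct, so this is a legitimate vertex-simple augmenting path, and augmenting along it completes the perfect matching.

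The step I expect to be the crux is the case $j\le n$ with $\bar b_j=0$, the generic situation for single-input systems: there the $\lambda$-edges alone strand $x_j$ with only the input column $v_{n+1}$ free, yet no direct edge $(x_j,v_{n+1})$ need exist. This is exactly where input-reachability is indispensable, and the delicate bookkeeping is the orientation convention --- a system edge $(x_p,x_q)$ encodes $\bar A_{qp}\ne 0$ --- under which a \emph{forward} path from the input to $x_j$ in ${\cal G}(\bar A,\bar b)$ must be read as a \emph{backward-threaded} augmenting path in ${\cal B}(H_\lambda)$. I would also remark that the rank condition ${\rm grank}([\bar A,\bar b])=n$ of Lemma \ref{theo-strucon} is not actually needed for this lemma; input-reachability alone drives the argument.
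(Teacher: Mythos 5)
Your proof is correct and is essentially the paper's own argument: the paper handles $j=n+1$ via the $\lambda$-edges exactly as you do, and for $j\le n$ it directly writes down the size-$n$ matching $\{(x_{j_1},v_{n+1}),(x_{j_2},v_{j_1}),\dots,(x_{j_r},v_{j_{r-1}})\}\cup\{(x_i,v_i): i\notin\{j_1,\dots,j_r\}\}$, which is precisely the matching your augmenting path produces after the swap. Your augmenting-path packaging, the explicit insistence on a simple path, and the remark that only input-reachability (not the rank condition) is used are harmless refinements of the same construction.
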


\begin{proof} If $j=n+1$, it is obvious ${\rm mt}({\cal B}(H_\lambda^{{j_{\rm c}}}))=n$ as ${\cal E}_I$ is a matching with size $n$. Now consider $j\in\{1,...,n\}$. As  $(\bar A,\bar b)$ is structurally controllable, from Lemma \ref{theo-strucon}, there is a path from $x_{n+1}$ to $x_j$ in the system graph ${\cal G}(\bar A, \bar b)$. Denote such a path by $\{(x_{n+1},x_{j_1}),(x_{j_1},x_{j_2}),...,(x_{j_{r-1}},x_{j_r})\}$ with  $\{j_1,...,j_r\}\subseteq {\cal J}_n$ and $j_r=j$. Since each $(x_{j_k},x_{j_{k+1}})$ in ${\cal G}(\bar A, \bar b)$ corresponds to $(x_{j_{k+1}},v_{j_k})$ in ${\cal B}(H_\lambda)$, $\{(x_{j_1},v_{n+1}),(x_{j_2},v_{j_1}),...,(x_{j_r},v_{j_{r-1}})\}\cup \{(x_i,v_i): i\in {\cal J}_n\backslash \{j_1,...,j_r\}\}$ forms a matching with size $n$ in ${\cal B}(H_\lambda^{{j_{\rm c}}})$.
\end{proof}

Let ${\cal G}^{j_{\rm c}}_k=({\cal V}^+_k,{\cal V}_k^-,{\cal E}_k)$ ($k=0,1,...,d,\infty$) be the DM-components of ${\cal B}(H_\lambda^{{j_{\rm c}}})$. From Lemma \ref{fullrank}, we know that both the horizontal tail and the vertical one are empty. Accordingly, let $M^{j_{\rm c}}_\lambda$ be the DM-decomposition of $H_\lambda^{{j_{\rm c}}}$ with the corresponding permutation matrices $P$ and $Q$, i.e.,
\begin{equation}\label{DMdecomposition} PH_\lambda^{{j_{\rm c}}}Q=\left[
      \begin{array}{ccc}
       M^{j_{\rm c}}_1(\lambda) & \cdots & M^{j_{\rm c}}_{1d}(\lambda) \\
        0 & \ddots & \vdots \\
         0 & \cdots & M^{j_{\rm c}}_d(\lambda)  \\
      \end{array}
    \right]\doteq M^{j_{\rm c}}_\lambda.\end{equation}
Moreover, define $M_\lambda^j\doteq PH_\lambda^j$. Suppose that $x_i$ is the $\bar i$th vertex in ${\cal V}^+_{i^*}$ ($1\le \bar i\le |{\cal V}^+_{i^*}|$, $1\le i^* \le d$).

For $k\in \{1,...,d\}$, let $\gamma_{\min}({\cal G}_k^{j_{\rm c}})$ and $\gamma_{\max}({\cal G}_k^{j_{\rm c}})$ be respectively the minimum number of $\lambda$-edges and maximum number of $\lambda$-edges contained in a matching among all maximum matchings of ${\cal G}_k^{j_{\rm c}}$. Afterwards, define a boolean function $\gamma_{\rm nz}(\cdot)$ for ${\cal G}_k^{j_{\rm c}}$ as
\begin{equation}\label{fun-nz} \gamma_{\rm nz}({\cal G}_k^{j_{\rm c}})=
\begin{cases} 1 & {\begin{array}{c} \text{if}\ \gamma_{\max}({\cal G}_k^{j_{\rm c}})-\gamma_{\min}({\cal G}_k^{j_{\rm c}})>0 \\ \text{or} \ {\cal G}_k^{j_{\rm c}} \ {\text{contains a self-loop}} \end{array}} \\
0 &  \text{otherwise}.
\end{cases}\end{equation}

From Lemma \ref{matrix_pencil} in the appendix, $\gamma_{\rm nz}({\cal G}_k^{j_{\rm c}})=1$ means $\det M_k^j(\lambda)$ has at least one nonzero root for $\lambda$, while $\gamma_{\rm nz}({\cal G}_k^{j_{\rm c}})=0$ means the contrary. The following lemma shows that $\gamma_{\rm nz}({\cal G}_k^{j_{\rm c}})$ can be determined in polynomial time via the maximum/minimum weight maximum matching algorithms.

\begin{lemma} Assign the following weight function $W_k: {\cal E}_k \rightarrow \{0,1\}$ for ${\cal G}_k^{j_{\rm c}}$ as
$$W_k(e)=
\begin{cases} 1 & \text{if}\ e \ \text{is a} \ \lambda\text{-edge}\\
0 & \text{if}\ e\in {\cal E}_k\backslash {\cal E}_I.
\end{cases}$$
Then, it is true that $$\gamma_{\max}({\cal G}_k^{j_{\rm c}})= {\text{ the maximum weight maximum matching of}}\ {\cal G}_k^{j_{\rm c}},$$
$$\gamma_{\min}({\cal G}_k^{j_{\rm c}})= {\text{ the minimum weight maximum matching of}}\ {\cal G}_k^{j_{\rm c}}.$$
\end{lemma}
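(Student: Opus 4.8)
The plan is to exploit that the proposed weight function merely counts $\lambda$-edges, so that the weight of any matching coincides with its number of $\lambda$-edges; the two claimed equalities then collapse to the very definitions of $\gamma_{\max}$ and $\gamma_{\min}$.

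First I would observe that, by the construction of $W_k$, for any matching $M\subseteq {\cal E}_k$ of ${\cal G}_k^{j_{\rm c}}$ the total weight $W_k(M)=\sum_{e\in M} W_k(e)$ equals exactly the number of $\lambda$-edges contained in $M$. Indeed, $W_k$ assigns value $1$ to every $\lambda$-edge of ${\cal E}_k$ (including the self-loops, which are $\lambda$-edges by definition, since ${\cal E}_I\cap {\cal E}_{[\bar A,\bar b]}\subseteq {\cal E}_I$) and value $0$ to each edge in ${\cal E}_k\backslash {\cal E}_I$, so only the $\lambda$-edges of $M$ contribute to $W_k(M)$.

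Next I would note that, because ${\cal G}_k^{j_{\rm c}}$ is a consistent DM-component, ${\rm mt}({\cal G}_k^{j_{\rm c}})=|{\cal V}_k^+|=|{\cal V}_k^-|$, so its maximum matchings are precisely the perfect matchings of the component and the family over which we optimize is nonempty and well defined. Taking the maximum of $W_k(M)$ over all maximum matchings $M$ of ${\cal G}_k^{j_{\rm c}}$ therefore returns the largest number of $\lambda$-edges attainable in a maximum matching, which is $\gamma_{\max}({\cal G}_k^{j_{\rm c}})$ by definition; identically, minimizing $W_k(M)$ over the same family returns $\gamma_{\min}({\cal G}_k^{j_{\rm c}})$. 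This establishes both equalities.

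I do not expect a genuine obstacle here, as each equality is immediate once the weight of a matching is identified with its $\lambda$-edge count; the proof is essentially an unwinding of definitions. The point worth emphasizing is rather the computational payoff: since the maximum (resp. minimum) weight maximum matching of a weighted bipartite graph can be computed in polynomial time by standard assignment-type algorithms, this lemma is exactly what certifies that $\gamma_{\rm nz}({\cal G}_k^{j_{\rm c}})$, and hence the condition for the absence of nonzero uncontrollable modes, is verifiable in polynomial time.
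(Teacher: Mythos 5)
Your proof is correct and matches the paper's approach: the paper itself disposes of this lemma as "straightforward from the construction of $W_k$ and the definitions," and your unwinding of the weight of a matching as its $\lambda$-edge count is exactly that argument made explicit.
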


\begin{proof} Straightforward from the construction of $W_k$ and the definitions of the maximum (minimum) weight maximum matching of a bipartite graph.
\end{proof}

%  ${\cal G}^j_{ki^*}=\left({\cal V}^{+}_k\cup {\cal V}^{+}_{k+1}\cdots\cup {\cal V}^+_{i^*}\backslash \{x_i\}, {\cal V}^{-}_k\cup {\cal V}^{-}_{k+1}\cdots \cup {\cal V}^-_{i^*}, \bar {\cal E}_{ki^*}, W\right)$, where

Furthermore, define a set
\begin{equation}\label{important-set} {\Omega}_j=\{k\in {\mathbb N}: 1\le k\le i^*, {\gamma}_{\rm nz}({\cal G}^{j_{\rm c}}_k)= 1\}.\end{equation}
For each $k\in {\Omega}_j$, define a weighted bipartite graph ${\cal G}^{j_{\rm c}}_{ki^*}=\left(\bar {\cal V}^+_{ki^*}, \bar {\cal V}^-_{ki^*},  \bar {\cal E}_{ki^*}, W \right)$, where $\bar {\cal V}^+_{ki^*}={\cal V}^{+}_k\cup {\cal V}^{+}_{k+1}\cdots\cup {\cal V}^+_{i^*}\backslash \{x_i\}$, $\bar {\cal V}^-_{ki^*}= {\cal V}^{-}_k\cup {\cal V}^{-}_{k+1}\cdots \cup {\cal V}^-_{i^*} $, $\bar {\cal E}_{ki^*}= \{(x_i,v_l)\in {\cal E}: x_i\in \bar {\cal V}^+_{ki^*}, v_l\in \bar {\cal V}^-_{ki^*}\}$, and the weight $W(e): \bar {\cal E}_{ki^*} \rightarrow \{0,1\}$
$$W(e)=
\begin{cases} 1 & \text{if}\ e\in {\cal E}_{k}\\
0 & {\text{otherwise.}}
\end{cases}$$In other words, ${\cal G}_{ki^*}^{j_{\rm c}}$ is the subgraph of ${\cal B}(H^{j_{\rm c}}_\lambda)$ induced by vertices $\bar {\cal V}_{ki^*}^+\cup \bar {\cal V}^{-}_{ki^*}$.

\begin{proposition} \label{mainpro}
\label{nonzeroCondition} Suppose $(\bar A, \bar b)$ is structurally controllable, and there is only one nonzero entry in ${\bar F}$ with its position being $(i,j)$.  Then, for almost all $(A,b)\in {\bf CS}(\bar A,\bar b)$, there is a $(\Delta A, \Delta b)\in {\bf S}_{{\bar F}}$ such that a nonzero $n$-vector $q$ exists making $q^{\intercal}[A+\Delta A-\lambda I, b+\Delta b]=0$ for some nonzero $\lambda\in {\mathbb C}$, if and only if there exists a $k\in {\Omega}_j$ associated with which the minimum weight maximum matching of  ${\cal G}^{j_{\rm c}}_{ki^*}$ defined above is less than $|{\cal V}^{+}_k|$.%\footnote{It can be further  verified that, if $i^*\in \Omega_j$, then this condition is automatically satisfied.}
\end{proposition}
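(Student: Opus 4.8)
The plan is to reduce the existence of a destabilizing single-edge perturbation to a purely spectral condition on the reduced pencil $H_\lambda^{j_{\rm c}}$, and then to read that condition off the block-triangular DM-form (\ref{DMdecomposition}) through a matching argument. First I would use that the only nonzero entry of $[\Delta A,\Delta b]$ sits at $(i,j)$, so the equation $q^{\intercal}[A+\Delta A-\lambda I, b+\Delta b]=0$ splits column-wise: every column other than $j$ gives $q^{\intercal}H_\lambda^{j_{\rm c}}=0$, while column $j$ gives $q^{\intercal}H_\lambda^{j}+[\Delta A,\Delta b]_{ij}\,q_i=0$. Hence a nonzero mode is creatable exactly when $\det H_\lambda^{j_{\rm c}}$ has a nonzero root $\lambda_0$ admitting a left null vector $q$ of $H_{\lambda_0}^{j_{\rm c}}$ with $q_i\neq 0$: if $q_i\neq 0$ one simply solves the $j$-th equation for $[\Delta A,\Delta b]_{ij}$, whereas if every such $q$ had $q_i=0$ the $j$-th equation would force $q^{\intercal}H_{\lambda_0}=0$, contradicting the controllability of $(A,b)$. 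This turns the statement into ``$\det H_\lambda^{j_{\rm c}}$ has a nonzero root at which some left null vector of $H_\lambda^{j_{\rm c}}$ has nonzero $i$-th coordinate.''

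Next I would exploit (\ref{DMdecomposition}): $\det H_\lambda^{j_{\rm c}}=\pm\prod_{k=1}^{d}\det M_k^{j_{\rm c}}(\lambda)$, so its nonzero roots are precisely those of the diagonal blocks, and by Lemma \ref{matrix_pencil} block $\mathcal{G}_k^{j_{\rm c}}$ contributes one exactly when $\gamma_{\rm nz}(\mathcal{G}_k^{j_{\rm c}})=1$. For almost all realizations these roots are simple and no two blocks share a nonzero root, so at a root $\lambda_0$ of $\det M_k^{j_{\rm c}}$ only block $k$ is singular and the left null space of $H_{\lambda_0}^{j_{\rm c}}$ is one-dimensional. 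Partitioning the permuted null vector $Pq$ blockwise and back-substituting through the triangular system shows that all blocks of index below $k$ vanish; since $x_i$ lies in block $i^{*}$, its coordinate can be nonzero only when $k\le i^{*}$, which is exactly the range defining $\Omega_j$ in (\ref{important-set}).

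The core step is, for a fixed $k\in\Omega_j$, to decide whether this one-dimensional null vector has nonzero $i$-th coordinate. By Lemma \ref{basiclemma} this is equivalent to the submatrix of $\tilde M(\lambda_0)$ obtained by deleting the row of $x_i$ having full row rank, where $\tilde M$ is the principal block-triangular submatrix on blocks $k,\dots,i^{*}$; note $\mathcal{G}_{ki^{*}}^{j_{\rm c}}$ is precisely the bipartite graph of this deleted-row submatrix. The key structural fact is that the columns of block $k$ are adjacent only to the rows of block $k$, so any maximal minor retaining all block-$k$ columns factors through $\det M_k(\lambda)$ and hence vanishes at $\lambda_0$; a surviving minor must therefore drop a block-$k$ column, i.e. leave a block-$k$ column unmatched. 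Assigning weight $1$ to the block-$k$ edges, a maximum matching leaving a block-$k$ column unmatched uses at most $|\mathcal{V}_k^{+}|-1$ of them, so such a minor exists iff the minimum-weight maximum matching of $\mathcal{G}_{ki^{*}}^{j_{\rm c}}$ is strictly below $|\mathcal{V}_k^{+}|$. Conversely, if that minimum weight equals $|\mathcal{V}_k^{+}|$ then every maximum matching saturates all block-$k$ columns, so each maximal minor either omits a block-$k$ column (and then has no maximum matching, hence is identically zero) or retains them all (and then is divisible by $\det M_k$, hence vanishes at $\lambda_0$); in both cases the deleted-row submatrix is rank deficient at $\lambda_0$ and $q_i=0$. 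Quantifying over $k\in\Omega_j$ and combining with the first step yields the claimed equivalence.

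The main obstacle I anticipate is the genericity bookkeeping in the ``if'' direction: one must guarantee that, for almost all realizations, a maximal minor that drops a block-$k$ column --- which is \emph{not} divisible by $\det M_k$ and hence not structurally forced to vanish --- actually fails to vanish at the particular nonzero root $\lambda_0$ of $\det M_k$. Because $\lambda_0$ itself moves with the realization, this cannot be argued entry-by-entry and instead relies on the nontrivial results on roots of determinants of generic matrix pencils underlying Lemma \ref{matrix_pencil}, which ensure that the root locus of $\det M_k$ is not contained in the zero set of the surviving cofactor for generic parameters, together with the fact (again generic) that $M_k$ is DM-irreducible and so drops rank by exactly one at a simple root. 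The possibility of coincident or higher-multiplicity roots lives on a measure-zero exceptional set, which the genericity framework established in Proposition \ref{generic_PSC} already absorbs.
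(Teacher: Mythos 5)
Your proposal is correct and takes essentially the same route as the paper: it reduces the single-entry perturbation to the existence of a nonzero root of $\det H_\lambda^{j_{\rm c}}$ at which a left null vector has nonzero $i$-th coordinate (via Lemma \ref{basiclemma}), localizes to the blocks $k,\dots,i^*$ of the DM block-triangular form, and characterizes nonsingularity of the deleted-row submatrix at that root by the weighted-matching condition, resolving the genericity of the surviving cofactor exactly as the paper does through part 3) of Lemma \ref{matrix_pencil} and Lemma \ref{root-factor} (the content of the paper's Lemma \ref{immediate}). The differences are only presentational.
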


The proof relies on a series of nontrivial results on the roots of determinants of generic matrix pencils, which is postponed to the appendix.
\subsection{Necessary and Sufficient Condition}
We are now giving a necessary and sufficient condition for PTSC with general perturbation structures.

\begin{theorem}\label{NeceSuf} Consider a structurally controllable pair $(\bar A,\bar b)$ and the perturbation structure $\bar F$. For each edge $e\doteq (x_j,x_i)\in {\cal E}_{\bar F}$, let $[\bar A^e,\bar b^e]=[\bar A, \bar b]\vee \bar F^{\{e\}}$, with $\bar F^{\{e\}}$ defined in Proposition \ref{OneEdgeEquivalence}. Moreover, let ${\Omega}_j$ and ${\cal G}^{j_{\rm c}}_{ki^*}$ be defined in the same way as in Proposition \ref{nonzeroCondition}, in which $(\bar A, \bar b)$ shall be replaced with $(\bar A^e,\bar b^e)$. Then, $(\bar A,\bar b)$ is PTSC w.r.t. $\bar F$, if and only if for each edge $e\doteq (x_j,x_i)\in {\cal E}_{\bar F}$, it holds simultaneously:

 1) ${\rm grank}(\bar H[{\cal J}_n, {\cal J}_{n+1}\backslash \{j\}])=n$ or ${\rm grank}(\bar H[{\cal J}_n\backslash \{i\}, {\cal J}_{n+1}\backslash \{j\}])=n-2$, with $\bar H=[\bar A^e,\bar b^e]$;

 2) ${\Omega}_j=\emptyset$, or otherwise for each $k\in {\Omega}_j$, the minimum weight maximum matching of the bipartite ${\cal G}^{j_{\rm c}}_{ki^*}$ is~$|{\cal V}^{+}_k|$.
\end{theorem}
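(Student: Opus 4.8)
The plan is to obtain the theorem as an assembly of the three preceding propositions rather than by a fresh computation. First I would invoke the one-edge preservation principle (Proposition \ref{OneEdgeEquivalence}) to reduce PTSC with the general perturbation structure $\bar F$ to a family of single-edge problems: since that proposition says $(\bar A,\bar b)$ is PSSC w.r.t. $\bar F$ exactly when \emph{some} edge $e\in{\cal E}_{\bar F}$ makes $[\bar A^e,\bar b^e]=[\bar A,\bar b]\vee\bar F^{\{e\}}$ PSSC w.r.t. $\bar F_{\{e\}}$, its contrapositive gives that $(\bar A,\bar b)$ is PTSC w.r.t. $\bar F$ if and only if, for \emph{every} edge $e=(x_j,x_i)\in{\cal E}_{\bar F}$, the augmented pair $[\bar A^e,\bar b^e]$ is PTSC w.r.t. the single-edge structure $\bar F_{\{e\}}$. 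Before using this dichotomy I would record that each $[\bar A^e,\bar b^e]$ is itself structurally controllable: it contains $[\bar A,\bar b]$ as a subpattern, so input-reachability is inherited and ${\rm grank}([\bar A^e,\bar b^e])=n$, whence Lemma \ref{theo-strucon} applies and the PTSC/PSSC alternative is well posed.

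Next I would analyze a single fixed edge $e$, writing $\bar H=[\bar A^e,\bar b^e]$ and letting $(i,j)$ be the perturbed position corresponding to $e$. The pivotal observation is a PBH splitting: a single-entry perturbation at $(i,j)$ destroys controllability of a realization exactly when the perturbed pair acquires an uncontrollable mode $\lambda$, and $\lambda$ is either zero or nonzero. Hence, for almost all $(A,b)\in{\bf CS}(\bar A^e,\bar b^e)$, some admissible perturbation makes the pair uncontrollable if and only if some perturbation creates a zero uncontrollable mode \emph{or} some perturbation creates a nonzero uncontrollable mode. Proposition \ref{ZeroCondition}, together with Remark \ref{equal-pro1}, states that the former event is avoided precisely under Condition 1), while Proposition \ref{nonzeroCondition}, applied with $(\bar A,\bar b)$ replaced by $(\bar A^e,\bar b^e)$, states that the latter event is present precisely when some $k\in{\Omega}_j$ gives a minimum weight maximum matching of ${\cal G}^{j_{\rm c}}_{ki^*}$ below $|{\cal V}^+_k|$, whose negation is Condition 2). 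Taking the conjunction of the two negations, $[\bar A^e,\bar b^e]$ is PTSC w.r.t. $\bar F_{\{e\}}$ if and only if Conditions 1) and 2) both hold for $e$; ranging over all $e\in{\cal E}_{\bar F}$ then yields the stated characterization.

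The genuine technical weight here lies in the cited propositions themselves (especially the matching characterization of nonzero modes in Proposition \ref{nonzeroCondition}), so the assembly is the easy part; the one place that needs care is the genericity bookkeeping. Each of Propositions \ref{ZeroCondition} and \ref{nonzeroCondition} holds for all realizations outside a set of zero Lebesgue measure, and the one-edge reduction is likewise a generic statement, so I must intersect finitely many full-measure sets — one zero-mode set and one nonzero-mode set per edge — to secure a single measure-zero exceptional set off which the zero/nonzero dichotomy and all per-edge reductions hold simultaneously; since a finite union of measure-zero sets is measure-zero, the ``for almost all'' conclusion survives intact. I would also make explicit that the two alternatives in Condition 1) are exactly the mutually exclusive cases $r_j=n$ and $r_j=n-1$ identified in Remark \ref{equal-pro1}, so that Condition 1) is literally the negation of the event ``$i\in{\cal I}^*_j$'' appearing in Proposition \ref{ZeroCondition}, closing the logical loop.
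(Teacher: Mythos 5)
Your proposal is correct and follows essentially the same route as the paper, whose own proof of Theorem \ref{NeceSuf} is simply the one-line statement that it follows from Propositions \ref{OneEdgeEquivalence}--\ref{nonzeroCondition}; you have merely spelled out the assembly (the contrapositive of the one-edge reduction, the zero/nonzero PBH mode split matched to Proposition \ref{ZeroCondition} with Remark \ref{equal-pro1} and to Proposition \ref{nonzeroCondition}, and the finite intersection of full-measure sets), all of which is accurate.
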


%be the structured system associated with the graph ${\cal G}(\bar A, \bar b)\cup {\cal G}({\bar F})- \{e\}$

\begin{proof}Follows immediately from Propositions \ref{OneEdgeEquivalence}-\ref{nonzeroCondition}.
\end{proof}

Since each step in Theorem \ref{NeceSuf} can be implemented in polynomial time, its verification has polynomial complexity. To be specific, for each edge $e\in {\cal E}_{\bar F}$, to verify Condition 1), we can invoke the Hopcroft-Karp algorithm twice, which incurs time complexity $O(n^{0.5}|{\cal E}_{[\bar A, \bar b]}\cup {\cal E}_{\bar F}|)$ $\to O(n^{2.5})$. As for Condition 2), the DM-decomposition incurs $ O(n^{2.5})$, and computing the minimum weight maximum matching of ${\cal G}_{ki^*}^{j_{\rm c}}$ costs $O(n^3)$ \cite{DB_West_graph}. Since $|\Omega_j|\le n$, for each $e\in {\cal E}_{\bar F}$, verifying Condition 2) takes at most $O(n^{2.5}+n*n^3)$. To sum up, verifying Theorem \ref{NeceSuf} incurs time complexity at most $O(|{\cal E}_{\bar F}|(n^{2.5}+n^4))$, i.e., $O(|{\cal E}_{\bar F}|n^4)$. The procedure for verifying PTSC can be summarized as follows.

{\bf Algorithm for verifying PTSC for $(\bar A, \bar b)$ w.r.t. $\bar F$}: \begin{itemize}
\item[1.] Check structural controllability of $(\bar A, \bar b)$. If yes, continue; otherwise, break and return false.
\item[2.] For each $e=(x_j,x_i)\in {\cal E}_{\bar F}$, construct $(\bar A^{e}, \bar b^{e})$, and implement the following steps:
\begin{itemize}
\item[2.1] Check whether ${\rm grank}(\bar H[{\cal J}_n, {\cal J}_{n+1}\backslash \{j\}])=n$ or ${\rm grank}(\bar H[{\cal J}_n\backslash \{i\}, {\cal J}_{n+1}\backslash \{j\}])=n-2$, with $\bar H =[\bar A^e,\bar b^e]$. If yes, continue; otherwise, return false.
\item[2.2] Construct $\Omega_j$ and ${\cal G}^{j_{\rm c}}_{ki^*}$ associated with $(\bar A^e,\bar b^e)$.
\item[2.3] For each $k\in \Omega_j$, check whether the minimum weight maximum matching of ${\cal G}^{j_{\rm c}}_{ki^*}$ is equal to $|{\cal V}^{+}_k|$. If yes, continue; otherwise, break and return false.
\end{itemize}
\item[3.] If not break, return true.
\end{itemize}

% Furthermore, it can be figured out the verification of this theorem has complexity of $O(||{\bar F}||_0n^3)$, where $||\cdot||_0$ represents the number of nonzero entries in a matrix.
\begin{example}[Example \ref{exp1} continuing]\label{exp1-con} Let us revisit Example \ref{exp1}. Consider the perturbation $[\Delta A_2, \Delta b_2]$. For edge $e=(x_5,x_4)$,  the DM-decomposition of $H_{\lambda}^{j_{\rm c}}$ ($j=5$) associated with $(\bar A^e,\bar b^e)$ and the corresponding $M_{\lambda}^j$ are respectively
$$ M_{\lambda}^{j_{\rm c}}\!=\!\left[
\begin{array}{c|c|c|c}
  s-\lambda  &    0     &     c   &    0     \\
    \hline
           & e-\lambda  &    d    &    f       \\
   \cline{2-4} \multicolumn{2}{c|}{} & -\lambda & a \\
     \cline{3-4}        \multicolumn{3}{c|}{} &  -\lambda
\end{array}
\right], M_{\lambda}^{j}\!=\!\left[\!
                              \begin{array}{c}
                                0 \\
                                r \\
                                0 \\
                                h \\
                              \end{array}
                            \!\right].
$$
%$$ M_{\lambda}^{j_{\rm c}}=\left[
%\begin{array}{c|c|c|c}
%  s-\lambda  &    0     &     c    &    0     \\
%    \hline
%           & e-\lambda  &    d    &    f       \\
%   \cline{2-4} \multicolumn{2}{c|}{} & -\lambda & a \\
%     \cline{3-4}        \multicolumn{3}{c|}{} &  -\lambda
%\end{array}
%\right], M_{\lambda}^{j}=\left[
%                              \begin{array}{c}
%                                0 \\
%                                r \\
%                                0 \\
%                                h \\
%                              \end{array}
%                            \right].
%$$
It can be obtained that, $i^*=2$, and $\Omega_j=\{1,2\}$. Since $i^*\in \Omega_j$, according to Proposition \ref{nonzeroCondition}, the corresponding perturbed system can have nonzero uncontrollable modes (in fact, if $i^*\in \Omega_j$, then the condition in Proposition \ref{nonzeroCondition} is automatically satisfied). Therefore, $(\bar A,\bar b)$ is PSSC w.r.t. $[\Delta \bar A_2,\Delta \bar b_2]$, which is consistent with Example \ref{exp1}. On the other hand, consider the perturbation $(\Delta A_1, \Delta b_1)$. For the edge $e=(x_4,x_1)$, upon letting $j=4$, we obtain ${\cal I}_j^*=\{2,3,4\}$, which means $1\notin {\cal I}_j^*$. Hence, the condition in  Proposition \ref{ZeroCondition} is satisfied. Moreover, the associated $M_{\lambda}^{j_{\rm c}}$ and $M_{\lambda}^{j}$ are respectively
$$M_{\lambda}^{j_{\rm c}}=\left[
\begin{array}{c|c|c|c}
  h  &    g     &     -\lambda   &    0     \\
    \hline
           & -\lambda  &   0    &    c       \\
   \cline{2-4} \multicolumn{2}{c|}{} & f & d \\
   \cline{3-4} \multicolumn{2}{c|}{} & a & -\lambda
\end{array}
\right], M_{\lambda}^{j}=\left[
                              \begin{array}{c}
                                l \\
                                0 \\
                                e-\lambda \\
                               0 \\
                              \end{array}
                            \right], $$from which, $i^*=1$, and $\Omega_j=\emptyset$. It means Condition 2) of Theorem \ref{NeceSuf} is satisfied.
Similar analysis could be applied to the edge $e=(x_3,x_1)$, and it turns out that both conditions in Theorem \ref{NeceSuf} hold. Therefore, $(\bar A,\bar b)$ is PTSC w.r.t. $[\Delta \bar A_1,\Delta \bar b_1]$, which is also consistent with Example \ref{exp1}. %$\hfill\square$
\end{example}

\section{Implications to SCRPs}
PTC reflects the ability of a numerical system to preserve controllability under structured perturbations. This notion is closely related to the SCRP studied in \cite{KarowStructured2009,bianchin2016observability,johnson2018structured}, where the problem is formulated as searching the smallest perturbations (in terms of the Frobenius norm or $2$-norm) with a prescribed structure that result in an uncontrollable system.

It turns out that the SCRP is feasible, if and only if the original system is PSC w.r.t. the corresponding perturbation structure (considering complex-valued perturbations). Hence, before implementing any numerical algorithms on the (single-input) SCRP, we can check whether the corresponding structured system is PTSC w.r.t. the perturbation structure. If the answer is yes and the original numerical system is controllable, then there cannot exist numerical perturbations with the prescribed structure for which the perturbed system is uncontrollable; otherwise, with probability $1$ (before looking at the exact parameters of the original system), such a structured numerical perturbation exists.

\section{Conclusion}
This paper proposes a novel notion of PTSC to study controllability preservation for a structured system under structured numerical perturbations. It is shown this notion can characterize the generic property in controllability preservation for structured systems under structured numerical perturbations. A necessary and sufficient condition is given for a single-input system to be PTSC w.r.t. a prescribed perturbation structure. Readers can refer to \cite{full-version-tac} for extensions of this work to the multi-input case.

\begin{appendix}
\subsection{Proof of Proposition \ref{mainpro}}
\begin{lemma}  \citep[Lemma 2]{Rational_function} \label{root-factor} Let $p_1(\lambda,t_1,...,t_r)$ and $p_2(\lambda,t_1,...,t_r)$ be two polynomials on the variables $\lambda,t_1,...,t_r$ with real coefficients. Then, 1) For all $(t_1,...,t_r)\in {\mathbb C}^r$, $p_1(\lambda,t_1,...,t_r)$ and $p_2(\lambda,t_1,...,t_r)$ share a common zero for $\lambda$, if and only if $p_1(\lambda,t_1,...,t_r)$ and $p_2(\lambda,t_1,...,t_r)$ share a common factor in which the leading degree for $\lambda$ is nonzero; 2) If the above-mentioned condition is not satisfied, then for almost al $(t_1,...,t_r)\in {\mathbb C}^r$ (except for a set with zero Lebsgue measure), $p_1(\lambda,t_1,...,t_r)$ and $p_2(\lambda,t_1,...,t_r)$ do not share a common zero for $\lambda$.
\end{lemma}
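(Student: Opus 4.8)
The plan is to reduce both claims to the behavior of a single auxiliary polynomial in the parameters $t \doteq (t_1, \ldots, t_r)$, namely the \emph{resultant} $\rho(t) \doteq {\rm Res}_\lambda(p_1, p_2)$ of $p_1$ and $p_2$ taken with respect to $\lambda$ and formed over the ring $R \doteq {\mathbb R}[t_1, \ldots, t_r]$ (I view $p_1, p_2$ as elements of $R[\lambda]$ of positive $\lambda$-degree; the degenerate cases where some $p_i$ has $\lambda$-degree zero are handled directly). Since $\rho$ is an integer polynomial in the coefficients of $p_1, p_2$, it lies in $R$ and hence is a polynomial in $t$ with real coefficients. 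The foundational fact I would invoke is the standard resultant criterion over the fraction field $K \doteq {\mathbb C}(t_1, \ldots, t_r)$: $\rho \equiv 0$ if and only if $p_1$ and $p_2$ admit a common factor of positive degree in $K[\lambda]$; and by Gauss's lemma, since $R$ is a UFD, this is equivalent to $p_1, p_2$ sharing a common factor of positive $\lambda$-degree in $R[\lambda]$, i.e. exactly the right-hand condition of the claimed equivalence.

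For claim 1), I would argue both directions through $\rho$. For the forward direction I would use that whenever $p_1(\cdot, t^*)$ and $p_2(\cdot, t^*)$ possess a common $\lambda$-root $\alpha \in {\mathbb C}$, the linear factor $(\lambda - \alpha)$ produces a nontrivial syzygy and hence makes the Sylvester matrix (formed with the \emph{fixed} formal $\lambda$-degrees of $p_1, p_2$) singular, forcing $\rho(t^*) = 0$; this implication holds irrespective of any degree drop in the leading coefficients. Thus if a common zero exists for every $t^* \in {\mathbb C}^r$, then $\rho$ vanishes on all of ${\mathbb C}^r$, so $\rho \equiv 0$ as a polynomial, and the foundational fact delivers the common factor. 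For the converse, given a common factor $q$ with $\deg_\lambda q \ge 1$, I would produce a common zero at each $t^*$ by specializing $q$: here I would exploit that in the present setting one of $p_1, p_2$ is monic in $\lambda$ (the pencil determinants entering the application have leading $\lambda$-coefficient $\pm 1$), which forces the leading $\lambda$-coefficient of $q$ to be a nonzero constant, so $q(\cdot, t^*)$ retains degree $\ge 1$ and, by the fundamental theorem of algebra, has a root at every $t^*$, which is a shared zero of $p_1(\cdot, t^*)$ and $p_2(\cdot, t^*)$.

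For claim 2), the argument is clean and requires no control of the leading coefficients. If the common-factor condition fails, then by the foundational fact $\rho \not\equiv 0$. The zero set $\{t \in {\mathbb C}^r : \rho(t) = 0\}$ of a nonzero polynomial is a proper algebraic variety and therefore has zero Lebesgue measure. For every $t^*$ outside this set we have $\rho(t^*) \ne 0$, and by the contrapositive of the Sylvester-singularity implication used above (a common root forces $\rho = 0$), this precludes any common $\lambda$-zero of $p_1(\cdot, t^*)$ and $p_2(\cdot, t^*)$. Hence for almost all $t$ no common zero exists.

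The step I expect to be the main obstacle is the converse direction of claim 1): reconciling the purely algebraic vanishing of the resultant with the existence of a genuine common root in ${\mathbb C}$, as opposed to a spurious ``root at infinity'' that appears when the leading $\lambda$-coefficients degenerate under specialization (a simple example such as $p_1 = p_2 = t\lambda + 1$ shows the naive equivalence breaks down at $t = 0$). I would neutralize this through the monic-in-$\lambda$ normalization noted above, which guarantees that any common factor carries a non-vanishing constant leading $\lambda$-coefficient and so specializes to a polynomial of the same positive degree at \emph{every} parameter value; alternatively one may homogenize and use the projective resultant, discarding the extra measure-zero degeneracy locus, which still suffices for claim 2). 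The auxiliary facts that the zero locus of a nonzero polynomial has measure zero, and that a polynomial vanishing on all of ${\mathbb C}^r$ is identically zero, are routine and would be cited without elaboration.
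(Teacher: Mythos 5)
The paper never proves this lemma at all: it is imported verbatim, via the citation, from Lemma 2 of the referenced work, so there is no in-paper argument to compare against and your proposal must stand on its own. For the most part it does, and the resultant route is the natural one. The specialization fact you use is correct as stated: a common $\lambda$-root of $p_1(\cdot,t^*)$ and $p_2(\cdot,t^*)$ annihilates every element of the row space of the Sylvester matrix formed at the \emph{formal} degrees, which is therefore singular, so $\rho(t^*)=0$ regardless of leading-coefficient degeneration; combined with the resultant criterion over ${\mathbb C}(t_1,\ldots,t_r)$ and Gauss's lemma, this cleanly yields the forward half of 1) and all of 2). Notably, part 2) is the \emph{only} part of the lemma the paper actually invokes (in the proof of Lemma \ref{immediate} and in the sufficiency argument of Proposition \ref{mainpro}, to rule out generically shared nonzero roots of DM-block determinants), and your argument establishes it fully and unconditionally.

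The one substantive issue is the converse of 1), and you have diagnosed it correctly yourself: with $p_1=p_2=t\lambda+1$ the statement as literally written is false, so no blind proof could close that direction without importing an extra hypothesis. Your repair --- assume one of $p_1,p_2$ monic in $\lambda$, so any common factor has constant nonzero leading $\lambda$-coefficient and keeps positive degree under every specialization --- is sound mathematics, but your justification for it overreaches in this paper's setting: the DM-block determinants $\det M_k^{j_{\rm c}}(\lambda)$ arise from pencils $M-\lambda E$ in which $E$ is only a \emph{partial} identity, so the coefficient of the top power of $\lambda$ is $\pm$ a generic minor of $M$ --- generically nonzero, but not the constant $\pm 1$ unless the block's $\lambda$-edges contain a perfect matching of the block. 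A generically nonvanishing leading coefficient does rescue the ``almost all $t$'' conclusion 2) (your homogenization remark covers exactly this, since the extra degeneracy locus has measure zero), but it does not rescue the ``for all $t$'' claim in 1), whose converse can still fail on the vanishing locus of that leading coefficient. Since the paper never uses that direction, the gap is immaterial to its results, but you should say explicitly that you are proving a corrected, conditional version of 1) rather than the lemma verbatim, with the hypotheses under which the converse holds stated as part of your lemma.
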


\begin{lemma} \label{matrix_pencil}
Let $M$ be an $n\times n$ generic matrix over the variables $t_1,...,t_r$, and $E\in \{0,1\}^{n\times n}$, where each row, as well as each column of $E$, has at most one entry being $1$ and the rest being $0$. Let $P_\lambda\doteq M-\lambda E$ be a generic matrix pencil. Moreover, ${\cal B}({P_{\lambda}})$ is the bipartite graph associated with $P_{\lambda}$ defined similarly to ${\cal B}(H_\lambda)$ by replacing $I$ with $E$ (notably, self-loops are edges in ${\cal E}_E\cap {\cal E}_M$). The following statements are true:

1) Suppose ${\cal B}({P_{\lambda}})$ contains no self-loop. Let $\gamma_{\min}({\cal B}({P_{\lambda}}))$ and $\gamma_{\max}({\cal B}({P_{\lambda}}))$ be respectively the minimum number of $\lambda$-edges and maximum number of $\lambda$-edges contained in a matching among all maximum matchings of ${\cal B}({P_{\lambda}})$. Then, the generic number of nonzero roots of $\det (P_{\lambda})$ for $\lambda$ (counting multiplicities) equals $\gamma_{\max}({\cal B}({P_{\lambda}}))- \gamma_{\min}({\cal B}({P_{\lambda}}))$.

2) If ${\cal B}({P_{\lambda}})$ is DM-reducible, then $\det (P_{\lambda})$ generically has nonzero roots for $\lambda$ whenever ${\cal B}({P_{\lambda}})$ contains a self-loop.

3) Suppose that ${\cal B}({P_{\lambda}})$ is DM-reducible. Let ${\cal T}_i$ be the subset of variables of $t_1,...,t_r$ that appear in the $i$th column of $M$. Then, for each $i\in \{1,...,n\}$, every nonzero root of $\det (P_{\lambda})$ (if exists) cannot be independent of ${\cal T}_i$.
\end{lemma}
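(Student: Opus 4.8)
The plan is to reduce all three parts to the Leibniz expansion of $\det(P_\lambda)$ read off the bipartite graph ${\cal B}(P_\lambda)$. Writing $\det(P_\lambda)=\sum_{\sigma}{\rm sgn}(\sigma)\prod_{i}(P_\lambda)_{i\sigma(i)}$, each nonvanishing term corresponds to a perfect matching $\sigma$, and its contribution is $\pm\lambda^{p(\sigma)+s(\sigma)}$ — where $p(\sigma)$ counts pure $\lambda$-edges (entry $-\lambda$) and $s(\sigma)$ counts self-loops (entry $m_{ii}-\lambda$) from which the $-\lambda$ part is taken — times a monomial in the generic entries $t_1,\dots,t_r$ contributed by the regular edges and by those self-loops whose $m_{ii}$ part is taken. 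The genericity fact I would establish first, and isolate as a standalone observation since all three parts lean on it, is that distinct perfect matchings produce distinct monomials in $t_1,\dots,t_r$ (the pure $\lambda$-edges on the complementary rows/columns are forced because the $1$-entries of $E$ form a partial permutation). Hence, after grouping by powers of $\lambda$, no cancellation occurs: the coefficient of $\lambda^k$ is a nonzero polynomial exactly when some perfect matching realizes $k$ as its $\lambda$-exponent.

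For part 1 (no self-loops, $s(\sigma)\equiv 0$), grouping gives $\det(P_\lambda)=\sum_{k=\gamma_{\min}}^{\gamma_{\max}}c_k(t)\lambda^k$, where the no-cancellation observation makes $c_{\gamma_{\min}}$ and $c_{\gamma_{\max}}$ nonzero polynomials. I would then write $\det(P_\lambda)=\lambda^{\gamma_{\min}}q(\lambda,t)$ with $\deg_\lambda q=\gamma_{\max}-\gamma_{\min}$ and constant term $c_{\gamma_{\min}}(t)$. For all $t$ outside the zero set of $c_{\gamma_{\min}}c_{\gamma_{\max}}$ — a proper algebraic set, hence Lebesgue-null — the polynomial $q$ has full degree and does not vanish at $\lambda=0$, so by the fundamental theorem of algebra it has exactly $\gamma_{\max}-\gamma_{\min}$ roots, all nonzero. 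This yields the generic count.

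For part 2 I would use DM-reducibility with Lemma \ref{reduciblility} to factor $\det(P_\lambda)=\pm\prod_k\det(P_\lambda^{(k)})$ over the DM-irreducible (hence perfectly matchable) diagonal components, and locate the self-loop inside one component $P_\lambda^{(k_0)}$, where it is admissible. It then suffices to exhibit a nonzero root of that single factor. A perfect matching $\sigma_0$ of $P_\lambda^{(k_0)}$ using the self-loop contributes, through the single entry $m_{ii}-\lambda$, both a term of $\lambda$-exponent $p(\sigma_0)+s(\sigma_0)$ with $s(\sigma_0)\ge 1$ (taking $-\lambda$) and a term of exponent $p(\sigma_0)$ (taking $m_{ii}$); by no-cancellation both survive, so these are two distinct powers of $\lambda$ and $\det(P_\lambda^{(k_0)})$ is not a monomial in $\lambda$. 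Factoring out its lowest power leaves a polynomial of positive $\lambda$-degree with nonzero constant term, which has a nonzero root, and this root is a nonzero root of $\det(P_\lambda)$.

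For part 3 I would again factor by DM-decomposition and localize column $i$ to the unique component $P_\lambda^{(k_0)}$ it meets, so that ${\cal T}_i$ lives entirely inside $\det(P_\lambda^{(k_0)})$, which is DM-irreducible. Inside this factor I would exploit that the determinant is affine in each entry of column $i$: expanding along that column gives $\det(P_\lambda^{(k_0)})=\sum_\rho (P_\lambda)_{\rho i}C_{\rho i}$, with cofactors $C_{\rho i}$ independent of ${\cal T}_i$. If a nonzero root $\lambda^*$ were independent of ${\cal T}_i$, then substituting $\lambda=\lambda^*$ would make this factor vanish identically in the variables of ${\cal T}_i$; affineness then forces each $C_{\rho i}(\lambda^*,\cdot)$ and the $\lambda$-bearing part of column $i$ to vanish at $\lambda^*$ for generic values of the remaining parameters, i.e. $\det(P_\lambda^{(k_0)})$ and a cofactor share the root $\lambda^*$ for almost all parameters. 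Invoking Lemma \ref{root-factor} I would then extract a common factor of positive $\lambda$-degree, contradicting the irreducibility of $\det(P_\lambda^{(k_0)})$ supplied by Lemma \ref{reduciblility}. I expect this last step — making the phrase ``a root independent of ${\cal T}_i$'' precise and converting it into a shared-factor statement fit for Lemma \ref{root-factor} — to be the main obstacle, whereas the no-cancellation genericity underlying parts 1 and 2 is routine once distinct matchings are matched to distinct monomials.
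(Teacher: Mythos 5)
Parts 1) and 2) of your proposal take essentially the paper's route. Your standalone ``no cancellation'' observation---distinct maximum matchings yield distinct monomials because the $1$-entries of $E$ form a partial permutation, so the $\lambda$-edges completing a given monomial in the $t$'s are forced---is exactly the paper's opening observation, and the rest follows as in the paper: factor out $\lambda^{\gamma_{\min}}$, identify the top $\lambda$-degree with $\gamma_{\max}$, invoke the fundamental theorem of algebra; for a self-loop, split the admissible term $(m_{ii}-\lambda)f$ into two surviving monomials of different $\lambda$-degree. Part 3) is where you genuinely diverge. The paper runs a variable-elimination argument: it successively zeroes out subsets of ${\cal T}_i$, expands along column $i$, and uses part 2) of Lemma \ref{reduciblility} to show that a root $z$ independent of ${\cal T}_i$ would have to be independent of \emph{all} the remaining variables, a contradiction. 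You instead expand along column $i$ once, use affineness in ${\cal T}_i$ to force every cofactor $C_{\rho i}(\lambda^*,\cdot)$ on the support of column $i$ to vanish generically, and then combine Lemma \ref{root-factor} with the irreducibility of $\det(P_\lambda)$ (part 3) of Lemma \ref{reduciblility}) to reach a contradiction. This is shorter and arguably cleaner than the paper's argument; to close it you still need two small steps you elide: at least one such cofactor is not identically zero (else $\det(P_\lambda)\equiv 0$), and the common factor supplied by Lemma \ref{root-factor}, being a nonunit divisor of the irreducible $\det(P_\lambda)$, must be $\det(P_\lambda)$ itself, which cannot divide a cofactor of strictly smaller degree.

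One caution. The hypothesis ``DM-reducible'' in parts 2) and 3) is evidently a typo for ``DM-irreducible'': the paper's own proof invokes the admissibility of every entry and part 2) of Lemma \ref{reduciblility}, both of which require irreducibility, and Lemma \ref{immediate} applies the result to the DM-irreducible diagonal blocks. Your attempt to honor the literal reading by factoring $\det(P_\lambda)$ over the DM-components and localizing to the block meeting column $i$ does not work for part 3): entries of column $i$ lying in off-diagonal blocks drop out of $\det(P_\lambda)$ altogether (so ${\cal T}_i$ does \emph{not} live entirely inside one diagonal factor), and nonzero roots contributed by the other diagonal blocks are genuinely independent of ${\cal T}_i$, so the claim is simply false under the literal hypothesis. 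Take the whole matrix DM-irreducible and drop the localization; your column-$i$ expansion then goes through as written.
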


\begin{proof}  We first prove an useful observation, that every maximum matching of ${\cal B}({P_{\lambda}})$ corresponds to a nonzero term (monomial) in $\det (P_{\lambda})$ that cannot be zeroed out by other terms, which is fundamental to the following proofs. For this purpose, consider a term $\lambda^{r_1}t_{l_1}t_{l_2}\cdots t_{l_{r_2}}$ associated with a maximum matching of ${\cal B}({P_{\lambda}})$, where $r_1+r_2=n$ and $\{l_1,...,l_{r_2}\}\subseteq \{1,...,r\}$. The only case to zero out $\lambda^{r_1} t_{l_1}t_{l_2}\cdots t_{l_{r_2}}$ in $\det (P_{\lambda})$ is that there exists a term being $\lambda^{r_1} t_{l_1}t_{l_2}\cdots t_{l_{r_2}}$ associated with another maximum matching of ${\cal B}({P_{\lambda}})$.  Now suppose that $R$ (resp. $C$) is the set of row indices (resp. column indices) of $t_{l_1},...,t_{l_{r_2}}$ in $M$, recalling that each $t_i$ ($i=1,...,r$) appears only once. Then, $\lambda^{r_1}$ must correspond to $r_1$ `1' entries in $E[{\cal J}_n\backslash R, {\cal J}_n\backslash C]$. However, as each row and each column has at most one `$1$' in $E$, the aforementioned configuration for `$1$' entries is unique, a contraction to the existence of two different maximum matchings associated with $\lambda^{r_1} t_{l_1}t_{l_2}\cdots t_{l_{r_2}}$.

We now prove 1). Suppose $\det (P_{\lambda})$ can be factored as $\lambda^{l}f(\lambda,t_1,...,t_r)$, where $l\in {\mathbb N}$ and $f(\lambda,t_1,...,t_r)$ is a polynomial of $\lambda,t_1,...,t_r$ that does not contain factors in the form of $\lambda^{\bar l}$ for any $\bar l\ge 1$. Because of the above observation, every term associated with a maximum matching of ${\cal B}({P_{\lambda}})$ must contain the factor $\lambda^l$. Therefore, the number $l$ of zero roots of $\det (P_{\lambda})$ equals  $\gamma_{\min}({\cal B}({P_{\lambda}}))$.  In addition, the maximum degree of $\lambda$ in $\lambda^{l}f(\lambda,t_1,...,t_r)$ appears in a term associated with a maximum matching containing the maximum number of $\lambda$-edges, which is exactly $\gamma_{\max}({\cal B}({P_{\lambda}}))$. The conclusion in 1) then follows immediately from the fundamental theorem of algebra.

Next, we prove 2). Consider a self-loop with the entry being $t_l-\lambda$ ($1\le l \le r$). As ${\cal B}({P_{\lambda}})$ is DM-reducible, every nonzero entry must be contained in $\det (P_{\lambda})$ by Definition \ref{DM-def}, which means $t_l-\lambda$ is contained in some term $(t_l-\lambda)f$ of $\det (P_{\lambda})$, where $f$ denotes a polynomial over variables $\{t_1,...,t_r\}\backslash \{t_l\}$ and $\lambda$. This term can be written as the sum of two terms $t_lf$ and $t_l\lambda f$, which indicates $\det (P_{\lambda})$ contains at least two monomials whose degrees for $\lambda$ differ from each other. Then, following the similar reasoning to the proof of 1),  $\det (P_{\lambda})$ contains at least one nonzero root.

We are now proving 3). Suppose such a nonzero root exists that is independent of ${\cal T}_i$ for some $i\in \{1,...,n\}$, and denote it by $z$. Let ${\cal T}[{\cal I}_1,{\cal I}_2]$ be the set of variables in $t_1,...,t_r$ that appear in $M[{\cal I}_1,{\cal I}_2]$ for ${\cal I}_1,{\cal I}_2\subseteq {\cal J}_n$, and let $R({\cal T}_s)$ (resp. $C({\cal T}_s)$) be the set of row (resp. column) indices of variables ${\cal T}_s\subseteq \{t_1,...,t_r\}$.  Suppose $[P_{\lambda}]_{k_0,i}=\lambda$ for some $k_0\in {\cal J}_n\backslash R({\cal T}_i)$ ($k_0$ can be empty). Upon letting all $t_k\in {\cal T}_i$ be zero, we obtain ($P_z=M-z E$)
$${\small\begin{array}{c}\begin{aligned} &\det (P_z)=\sum\nolimits_{j=1}^n(-1)^{i+j}[P_{z}]_{ji}\det(P_z[{\cal J}_n\backslash \{j\}, {\cal J}_n\backslash \{i\}])\\
&=z\cdot \det (P_z[{\cal J}_n\backslash \{k_0\}, {\cal J}_n\backslash \{i\}])=0, \end{aligned}\end{array}}$$
which indicates
\begin{equation}\label{lambda_zero} \det (P_z[{\cal J}_n\backslash \{k_0\}, {\cal J}_n\backslash \{i\}])=0,\end{equation}
as $z\ne 0$. Since $(P_{\lambda}[{\cal J}_n\backslash \{k_0\}, {\cal J}_n\backslash \{i\}])$ has full generic rank from Lemma \ref{reduciblility}, it concludes that $z$ depends solely on the variables ${\cal T}[{\cal J}_n\backslash \{k_0\}, {\cal J}_n\backslash \{i\}]$. Similarly, because of (\ref{lambda_zero}), for each $t_k\in {\cal T}_i$, fixing all $t_j\in {\cal T}_i\backslash \{t_k\}$ to be zero yields
$$\det (P_z[{\cal J}_n\backslash R(\{t_k\}), {\cal J}_n\backslash \{i\}])=0,$$
which indicates that $z$ depends on the variables ${\cal T}[{\cal J}_n\backslash R(\{t_k\}), {\cal J}_n\backslash \{i\}]$, being independent of the remaining variables. Taking the intersection of ${\cal T}[{\cal J}_n\backslash \{j\}, {\cal J}_n\backslash \{i\}]$ over all $j\in R({\cal T}_i)\cup \{k_0\}$, we obtain ${\cal T}[\Theta,{\cal J}_n\backslash \{i\}]$, where $\Theta\doteq {\cal J}_n\backslash (R({\cal T}_i)\cup \{k_0\})$. That is, $z$ depends on variables ${\cal T}[\Theta,{\cal J}_n\backslash \{i\}]$, and makes $P_{\lambda}[\Theta,{\cal J}_n\backslash \{i\}]$ row rank deficient. However, for each pair $(j,l)$, $j\! \in\! {\cal J}_n\backslash \{i\}$, $l\in  R({\cal T}_i)$, it~holds
$$\begin{array}{c}\begin{aligned} &{\rm grank} (P_{\lambda}[\Theta,{\cal J}_n\backslash \{i,j\}])\\
&\mge {\rm grank}((M\!-\!\lambda E)[{\cal J}_n\backslash \{l\} ,{\cal J}_n\backslash \{j\}])\!-\!(|R({\cal T}_i)\cup\{k_0\}|\!-1\!)\\
&\myeq n-|R({\cal T}_i)\cup\{k_0\}|=|\Theta|,\end{aligned}\end{array}$$
where (a) is due to that $P_{\lambda}[\Theta,{\cal J}_n\backslash \{i,j\}]$ is obtained by deleting $|R({\cal T}_i)\cup\{k_0\}|\!-1\!$ rows from $P_\lambda[{\cal J}_n\backslash \{l\} ,{\cal J}_n\backslash \{j\}]$, and (b) comes from ${\rm grank}(P_{\lambda}[{\cal J}_n\backslash \{l\} ,{\cal J}_n\backslash \{j\}])=n-1$ by 2) of Lemma \ref{reduciblility}. That is, after deleting any column from $P_{\lambda}[\Theta,{\cal J}_n\backslash \{i\}]$, the resulting matrix remains of full row generic rank, which induces at least one nonzero polynomial equation constraint on $z$ and ${\cal T}[\Theta,{\cal J}_n\backslash \{i,j\}]$. This indicates $z$ depends on ${\cal T}[\Theta,{\cal J}_n\backslash \{i,j\}]$, or equivalently, being independent of ${\cal T}[\Theta,\{j\}]$, for each $j\in {\cal J}_n\backslash \{i\}$. It finally concludes that $z$ is independent of the variables ${\cal T}[\Theta, {\cal J}_n\backslash \{i\}]$, causing a contraction. Therefore, the assumed $z$ cannot exist.
\end{proof}
\vspace{-0.1cm}
\begin{lemma}\label{immediate} Let $M^{j_{\rm c}}_\lambda$ and ${\Omega_j}$ be defined in (\ref{DMdecomposition}) and (\ref{important-set}). For each $k\in {\Omega_j}$, let $\tilde{M}^{j_{\rm c}}_{ki^*}(\lambda)\doteq M_{\lambda}^{j_{\rm c}}[\bar {\cal V}^+_{ki^*},\bar {\cal V}^-_{ki^*}]$, i.e.,
{\small$$
 \tilde{M}^{j_{\rm c}}_{ki^*}(\lambda)\!=\!\left[\!
                               \begin{array}{ccc}
                                 M_k^{j_{\rm c}}(\lambda) & \cdots & M_{ki^*}^{j_{\rm c}}(\lambda) \\
                                 0 & \ddots & \vdots \\
                                 0 & 0 & M^{j_{\rm c}}_{i^*}(\lambda)[{\cal J}_{|{\cal V}^+_{i^*}|}\backslash \{\bar i\},{\cal J}_{|{\cal V}^+_{i^*}|}]
                               \end{array}
                             \!\right].$$}Then, $\tilde{M}^{j_{\rm c}}_{ki^*}(\lambda)$ generically has full row rank when $\lambda \in \{z\in {\mathbb C}\backslash \{0\}: \det M_k^{j_{\rm c}}(z)=0 \}$, if and only if the minimum weight maximum matching of the bipartite ${\cal G}^{j_{\rm c}}_{ki^*}$ is less than $|{\cal V}^{+}_k|$.
\end{lemma}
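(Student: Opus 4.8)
The plan is to characterize the full row rank of $\tilde{M}^{j_{\rm c}}_{ki^*}(\lambda)$ at a root $z$ through its maximal minors and to translate each minor's vanishing at $z$ into a matching condition on ${\cal G}^{j_{\rm c}}_{ki^*}$. First I would set $N\doteq\sum_{l=k}^{i^*}|{\cal V}^+_l|$ and observe that $\tilde{M}^{j_{\rm c}}_{ki^*}(\lambda)$ is an $(N-1)\times N$ matrix, namely the full block upper-triangular matrix $\bar{M}(\lambda)\doteq M^{j_{\rm c}}_\lambda[{\cal V}^+_k\cup\cdots\cup{\cal V}^+_{i^*},{\cal V}^-_k\cup\cdots\cup{\cal V}^-_{i^*}]$ with the row indexed by $x_i$ (the $\bar i$th row of block $i^*$) deleted. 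Since a matrix with one more column than rows has full row rank if and only if at least one maximal minor (obtained by deleting a single column) is nonzero, I would reduce the claim to: there is a column $v$ with $C_{x_i,v}(z)\neq0$, where $C_{x_i,v}(\lambda)$ denotes the minor of $\bar M(\lambda)$ obtained by deleting the row of $x_i$ and the column $v$.

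The key structural observation is that, in the block upper-triangular form (\ref{DMdecomposition}), the columns of the leftmost block ${\cal V}^-_k$ have nonzero entries only in the rows ${\cal V}^+_k$. I would split the analysis by the location of the deleted column $v$. If $v\notin{\cal V}^-_k$, then all $|{\cal V}^-_k|$ columns of block $k$ are present and, being reachable only from ${\cal V}^+_k$, must be matched inside block $k$ in every term of the Leibniz expansion; hence the determinant factors as $\det M^{j_{\rm c}}_k(\lambda)$ times a minor over the remaining blocks, so $\det M^{j_{\rm c}}_k(\lambda)\mid C_{x_i,v}(\lambda)$ and $C_{x_i,v}(z)=0$. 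If instead $v\in{\cal V}^-_k$, then block $k$ retains all $|{\cal V}^+_k|$ rows but only $|{\cal V}^-_k|-1$ columns, so exactly one row of block $k$ must be routed forward through the off-diagonal blocks; consequently $C_{x_i,v}(\lambda)$ is a nonzero polynomial precisely when ${\cal G}^{j_{\rm c}}_{ki^*}-\{v\}$ admits a perfect matching, and its monomials do not contain the variables of the deleted column $v$, whereas $\det M^{j_{\rm c}}_k(\lambda)$ genuinely depends on them (block $k$ being DM-irreducible, by Lemma \ref{reduciblility}). I would then invoke Lemma \ref{root-factor}: since $\det M^{j_{\rm c}}_k(\lambda)$ is irreducible and does not divide $C_{x_i,v}(\lambda)$, for almost all parameter values the root $z$ of $\det M^{j_{\rm c}}_k$ is not a root of $C_{x_i,v}$, i.e. $C_{x_i,v}(z)\neq0$.

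Combining the two cases, full row rank at $z$ holds generically if and only if some column $v\in{\cal V}^-_k$ yields a nonzero $C_{x_i,v}$, equivalently ${\cal G}^{j_{\rm c}}_{ki^*}-\{v\}$ has a perfect matching for some $v\in{\cal V}^-_k$. Such a $v$ exists exactly when some maximum matching of ${\cal G}^{j_{\rm c}}_{ki^*}$ (necessarily of size $N-1$, obtained by matching each block $l<i^*$ internally and block $i^*$ minus $x_i$ internally) leaves a block-$k$ column unmatched; since block-$k$ columns can only be matched to block-$k$ rows, this is the same as a maximum matching using strictly fewer than $|{\cal V}^+_k|$ edges inside block $k$. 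By the weight function $W$ assigning $1$ to edges in ${\cal E}_k$ and $0$ otherwise, this is precisely the statement that the minimum weight maximum matching of ${\cal G}^{j_{\rm c}}_{ki^*}$ is less than $|{\cal V}^+_k|$, which is the asserted equivalence.

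The main obstacle is the step guaranteeing genuine, rather than merely formal, nonvanishing of $C_{x_i,v}$ at the correlated value $z$: because $z$ is algebraically tied to the block-$k$ parameters, one cannot treat $\lambda$ as a free generic variable. I expect the crux to be establishing the non-divisibility $\det M^{j_{\rm c}}_k\nmid C_{x_i,v}$ for $v\in{\cal V}^-_k$ rigorously, via the parameter-support argument above together with the non-cancellation of maximum-matching monomials proved within Lemma \ref{matrix_pencil}, so that Lemma \ref{root-factor} applies and rules out an accidental common root. Care is also needed for the degenerate case in which the deleted block-$k$ column carries only a $\lambda$-edge, where the support argument must be replaced by a comparison of $\lambda$-degrees.
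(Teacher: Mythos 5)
Your overall architecture coincides with the paper's: reduce full row rank of $\tilde{M}^{j_{\rm c}}_{ki^*}(z)$ to the nonvanishing of some maximal minor obtained by deleting a single column, observe that deleting a column outside ${\cal V}^-_k$ leaves a minor divisible by $\det M_k^{j_{\rm c}}(\lambda)$ (hence vanishing at $z$), and translate the existence of a usable column $v\in{\cal V}^-_k$ into the minimum weight maximum matching of ${\cal G}^{j_{\rm c}}_{ki^*}$ being less than $|{\cal V}^+_k|$. The necessity direction and the final combinatorial translation are sound and agree with the paper.

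The gap is in the step concluding $C_{x_i,v}(z)\neq 0$ for $v\in{\cal V}^-_k$. You derive irreducibility of $\det M_k^{j_{\rm c}}(\lambda)$ from DM-irreducibility via Lemma \ref{reduciblility}, but that lemma is stated for \emph{generic} square matrices, whose nonzero entries are algebraically independent parameters; $M_k^{j_{\rm c}}(\lambda)$ is a generic matrix \emph{pencil} in which the single variable $\lambda$ is tied across several entries, so DM-irreducibility of its bipartite graph does not, by that lemma, give irreducibility of $\det M_k^{j_{\rm c}}(\lambda)$ as a polynomial in $(\lambda,t)$. Without irreducibility, the non-divisibility $\det M_k^{j_{\rm c}}\nmid C_{x_i,v}$ (which your variable-support argument does establish) is not enough to apply Lemma \ref{root-factor}: a priori $\det M_k^{j_{\rm c}}$ could factor as $f\cdot h$ with $f$ of positive $\lambda$-degree and free of the column-$v$ variables, and $f$ could divide $C_{x_i,v}$, yielding a nonzero common root that is independent of those variables. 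The paper closes exactly this hole with part 3) of Lemma \ref{matrix_pencil} --- for a DM-irreducible pencil, every nonzero root of the determinant must depend on the variables of \emph{every} column --- which is the technical core of the appendix and is what you would need to prove (or, alternatively, an independent irreducibility result for pencil determinants) to make your step rigorous. A smaller issue in the same step: $\det M_k^{j_{\rm c}}$ and $C_{x_i,v}$ may legitimately share the factor $\lambda$, which is a common factor of positive $\lambda$-degree in the sense of Lemma \ref{root-factor}; one must first strip the powers of $\lambda$, or restrict attention to nonzero roots as the paper does through Lemma \ref{matrix_pencil} 3).
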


% \left[
%                                \begin{array}{cc}
%                                  M_k^{j_{\rm c}}(\lambda) & M^{j_{\rm c}}_{ki^*}(\lambda) \\
%                                  0 & M^j_{i^*}(\lambda)[{\cal J}_{|{\cal V}^+_{i^*}|}\backslash \{\bar i\},{\cal J}_{|{\cal V}^+_{i^*}|}] \\
%                                \end{array}
%                              \right].
%\footnote{If $k=i^*$, then $\tilde{M}^{j_{\rm c}}_{ki^*}(\lambda)=M^j_{i^*}(\lambda)[{\cal J}_{|{\cal V}^+_{i^*}|}\backslash \{\bar i\},{\cal J}_{|{\cal V}^+_{i^*}|}].$} that if the $\bar k$th column contains only one nonzero entry being $\lambda$ when $|{\cal V}^+_k|>1$, then ${\cal G}^{j_{\rm c}}_k$ is not DM-reducible. T

\begin{proof}  Sufficiency: Let $n_1=|{\cal V}^+_k \cup {\cal V}^+_{k+1}\cdots \cup{\cal V}^+_{i^*}|$. By Lemma \ref{reduciblility}, ${\rm mt}({\cal G}^{j_c}_{i^*}-\{x_i\})=|{\cal V}^+_{i^*}|-1$. Consequently, ${\cal G}^{j_{\rm c}}_{ki^*}$ has a maximum matching with size $n_1-1$ from its structure. Suppose that ${\cal G}^{j_{\rm c}}_{ki^*}$ has a maximum matching with weight less than $|{\cal V}^{+}_k|$. Then, ${\cal B}(\tilde{M}^{j_{\rm c}}_{ki^*}(\lambda)[{\cal J}_{n_1-1},{\cal J}_{n_1}\backslash \{\bar k\}])$ for some $\bar k\in \{1,...,|{\cal V}^-_k|\}$ must have a matching with size $n_1-1$. Indeed, if this is not true, then any maximum matching of ${\cal G}^{j_{\rm c}}_{ki^*}$ must matches ${\cal V}^-_k$, which certainly leads to a weight equaling $|{\cal V}^+_k|$, noting that each edge not incident to ${\cal V}^-_k$ has a zero weight. Furthermore, due to the DM-irreducibility of ${\cal G}^{j_{\rm c}}_k$, from Lemma \ref{matrix_pencil}, any nonzero root of $\det M_k^{j_{\rm c}}(\lambda)$ cannot be independent of the variables in the $\bar k$th column of $M_k^{j_{\rm c}}(\lambda)$.\footnote{Note the case where $|{\cal V}^+_k|=1$ and $M_k^{j_{\rm c}}(\lambda)=\lambda$ has been excluded by the nonzero root assumption.} Therefore, $\det M_k^{j_{\rm c}}(\lambda)$ and $\det \tilde{M}^{j_{\rm c}}_{ki^*}(\lambda)[{\cal J}_{n_1-1},{\cal J}_{n_1}\backslash \{\bar k\}]$ generically do not share a common nonzero root, since the latter determinant cannot contain the variables in the $\bar k$th column of $M_k^{j_{\rm c}}(\lambda)$ (except for $\lambda$).  That is, $\det \tilde{M}^{j_{\rm c}}_{ki^*}(\lambda)[{\cal J}_{n_1-1},{\cal J}_{n_1}\backslash \{\bar k\}]$ is generically nonzero for $\lambda \in \{z\in {\mathbb C}\backslash \{0\}: \det M_k^{j_{\rm c}}(z)=0\}$, leading to the full row rank of $\tilde{M}^{j_{\rm c}}_{ki^*}(\lambda)$.

Necessity: If $k=i^*$, the necessity is obvious. Consider $k<i^*$. Suppose the minimum weight maximum matching of ${\cal G}^{j_{\rm c}}_{ki^*}$ equals $|{\cal V}^+_k|$. Then,  based on the above analysis, any maximum matching of ${\cal G}^{j_{\rm c}}_{ki^*}$ must match ${\cal V}^-_k$, which leads to a zero determinant of ${\tilde M}^{j_{\rm c}}_{ki^*}(\lambda)$, due to the block-triangular structure of ${\tilde M}^{j_{\rm c}}_{ki^*}(\lambda)$ and the fact that $\det M_k^{j_{\rm c}}(\lambda)=0$, contradicting the full row rank of ${\tilde M}^{j_{\rm c}}_{ki^*}(\lambda)$.
\end{proof}

{\bf Proof of Proposition \ref{mainpro}}: In the following, suppose
vertex $x_i$ corresponds to the $\hat i$th row of $M^{j_{\rm c}}_\lambda$ after the permutation by $P$, i.e. $[M_\lambda^j]_{\hat i}=[PH_\lambda^j]_{\hat i}=[H_\lambda^j]_i$. Recall the involved $H_\lambda, M^{j_c}_{\lambda}$ and their submatrices are treated as generic matrix pencils.

 Sufficiency: From Lemma \ref{matrix_pencil}, we know that for each $k\in \Omega_j$, there exists a nonzero $\lambda$ making $\det M_k^{j_{\rm c}}(\lambda)=0$.  To distinguish such value from the variable $\lambda$, we denote it by $z$ (i.e., $z\ne 0$ and $\det M_k^{j_{\rm c}}(z)=0$). From Lemma \ref{root-factor}, it generically holds that $\det M_l^j(z)\ne 0$ for all $l\in \{1,...,b\}\backslash \{k\}$, as $\det M_k^{j_{\rm c}}(\lambda)$ and $\det M_l^j(\lambda)$ do not share any common factor except the factor $\lambda$. Due to the block-triangular structure of $M^{j_{\rm c}}_z$ (obtained by replacing $\lambda$ with $z$ in $M^{j_{\rm c}}_\lambda$), it can be seen readily that if $\tilde{M}^{j_{\rm c}}_{ki^*}(z)$ defined in Lemma \ref{immediate} generically has full row rank, then $M^{j_{\rm c}}_z[{\cal J}_n\backslash \{\hat i\}, {\cal J}_{n}]$ will do. The former condition has been proven in Lemma \ref{immediate}.

Note also that $M^{j_{\rm c}}_z$ generically has rank $n-1$ as otherwise $[\bar A-zI,\bar b]$ generically has rank less than $n$, contradicting the structural controllability of $(\bar A, \bar b)$. Therefore, from Lemma \ref{basiclemma}, letting ${\hat q}$ be a nonzero vector in the left null space of $M^{j_{\rm c}}_z$, we have ${\hat q}_{\hat i}\ne 0$. For almost all $(A,b)\in {\bf CS}(\bar A, \bar b)$, by letting $[\Delta A, \Delta b]_{ij}=-1/{\hat q}_{\hat i} \sum \nolimits_{l=1}^n{\hat q}_l(P[A-zI,b])_{lj},$ we get
$$\begin{array}{c}\begin{aligned}&{\hat q}^{\intercal}P([A-zI,b]+[\Delta A, \Delta b])[{\cal J}_n,\{j\}]\\
&={\hat q}_{\hat i}(P[\Delta A, \Delta b])_{\hat i,j}+\sum \nolimits_{l=1}^n {\hat q}_l(P[A-zI,b])_{lj}\\
&=0,\end{aligned} \end{array}$$
where the second equality is due to $(P[\Delta A,\Delta b])_{\hat i,j}=[\Delta A, \Delta b]_{ij}$. Upon defining $q^{\intercal}\doteq {\hat q}^{\intercal}P$, we have
$$q^{\intercal}([A-zI,b]+[\Delta A, \Delta b])=0,$$
which comes from the fact  $q^{\intercal}H_z^{j_{\rm c}}Q=0$ and $Q$ is invertible.

Necessity: For the existence of $q$ making the condition in Proposition \ref{mainpro} satisfied, it is necessary $H_\lambda^{j_{\rm c}}$ should be of rank deficient at some nonzero value for $\lambda$ (generically). Denote such a value by $z$ for the sake of distinguishing it from the variable $\lambda$. Since DM-decomposition does not alter the rank, $M_{z}^{j_{\rm c}}$ should be of row rank deficient too. From the block-triangular structure of $M_z^{j_{\rm c}}$ (see (\ref{DMdecomposition})), there must exist some $k\in \{1,...,b\}$, such that $M^{j_{\rm c}}_{k}(z)$ is singular generically. From Lemma \ref{matrix_pencil}, such an integer $k$ must correspond to a ${\cal G}_k^{j_{\rm c}}$ satisfying ${\gamma}_{\rm nz}({\cal G}_k^{j_{\rm c}})=1$. We consider two cases: i) $k> i^*$, and ii) $k\le i^*$.

In case i), since $k>i^*$, from the upper block-triangular structure of $M_z^{j_{\rm c}}$, it is clear that $M_z^{j_{\rm c}}[{\cal J}_n\backslash \{\hat i\}, {\cal J}_n]$ is of row rank deficient when $\det M^{j_{\rm c}}_k(z)=0$. Note that ${\rm grank}(M_{z}^{j_{\rm c}})\ge n-1$ as otherwise ${\rm grank}(H_z)< n$, which is contradictory to the structural controllability of $(\bar A, \bar b)$. Consequently, $M_{z}^{j_{\rm c}}$ has a left null space with dimension one. Denote by $\hat q$ the vector spanning that space. From Lemma \ref{basiclemma}, ${\hat q}_{\hat i}=0$. As a result, for any $[\Delta A, \Delta  b]\in {\bf S}_{[ {\Delta \bar A},  {\Delta \bar b}]}$,
$$\hat q^{\intercal} \big\{M^j_{z}+(P[\Delta A, \Delta  b])[{\cal J}_n,\{j\}]\big\}\myeqa \hat q^{\intercal} M^j_z \ne 0,$$
where (a) results from $(P[\Delta A, \Delta b])_{\hat ij}=[\Delta A, \Delta b]_{ij}$, and the inequality  from $\hat q^{\intercal} [M^{j_{\rm c}}_z,M^j_z]\ne 0$, as otherwise $\hat q^{\intercal} P H_z=0$ meaning that $z$ will be an uncontrollable mode (noting that $\hat q^{\intercal}  [M^{j_{\rm c}}_z,M^j_z]=\hat q^{\intercal} P H_z Q=0$, and $Q$ is invertible). Consequently, case i) cannot lead to the required results.

Therefore, $k$ must fall into case ii). Now suppose that the minimum weight maximum matching of ${\cal G}_{ki^*}^{j_{\rm c}}$ is equal to $|{\cal V}_k^+|$. Then, from Lemma \ref{immediate} and by the block-triangular structure of $M_z^{j_{\rm c}}$, we obtain that $M_z^{j_{\rm c}}[{\cal J}_n\backslash \{\hat i\}, {\cal J}_n]$ is generically of row rank deficient. Following the similar reasoning to case i), it turns out that the requirement in Proposition \ref{mainpro} cannot be satisfied. This proves the necessity. $\hfill\square$
\end{appendix}

%\end{theorem}

{\bibliographystyle{unsrt}
{\footnotesize
\bibliography{yuanz3}
}}

\end{document}